\pdfoutput=1
\documentclass[journal]{IEEEtran}

\usepackage{amsmath, amssymb, amsthm}
\usepackage{pgfplots}
\pgfplotsset{compat=1.18}
\usepackage{algorithm}
\usepackage{algorithmic}
\usepackage{enumitem}
\usepackage{mathtools}
\usepackage{bm}
\usepackage{cite}
\definecolor{navyblue}{HTML}{040A94}
\usepackage{hyperref}
\hypersetup{
	colorlinks=true,
	linkcolor=navyblue,   
	citecolor=navyblue,   
	urlcolor=navyblue,    
	pdfborder={0 0 0},
	pdftitle={Extracting Exact Lie Derivatives Without Backpropagation: A Dual Compiler for Neural Control Barrier Functions},
	pdfauthor={Mohammadreza Kamaldar},
	pdfsubject={Control Theory, Neural Networks, Embedded Systems},
	pdfkeywords={Control Barrier Functions, Dual Algebra, Embedded Systems, Neural Networks}
}
\usepackage[hyperref]{xcolor}
\usepackage[T1]{fontenc}
\usepackage{courier}
\usepackage{tikz}
\usetikzlibrary{arrows.meta, positioning, calc, fit, backgrounds, patterns}
\theoremstyle{definition}
\newtheorem{theorem}{Theorem}
\newtheorem{lemma}{Lemma}
\newtheorem{corollary}{Corollary}
\newtheorem{proposition}{Proposition}
\newtheorem{definition}{Definition}
\newtheorem{remark}{Remark}
\newtheorem{assumption}{Assumption}
\newtheorem{example}{Example}
\newtheorem{fact}{Fact}

\newcommand{\BBR}{\mathbb{R}}
\newcommand{\BBN}{\mathbb{N}}
\newcommand{\BBZ}{\mathbb{Z}}
\newcommand{\BBD}{\mathbb{D}}
\newcommand{\SD}{\mathcal{D}}
\newcommand{\SC}{\mathcal{C}}
\newcommand{\SK}{\mathcal{K}}
\newcommand{\SU}{\mathcal{U}}
\newcommand{\exmark}{\hfill$\triangle$}
\newcommand{\rmT}{\mathrm{T}}
\newcommand{\nn}{\nonumber}
\newcommand{\eps}{\varepsilon}
\newcommand{\matl}{\begin{bmatrix}}
\newcommand{\matr}{\end{bmatrix}}
\DeclareMathOperator{\Real}{Re}
\DeclareMathOperator{\Dual}{Du}
\DeclareMathOperator{\diag}{diag}
\DeclareMathOperator{\relu}{ReLU}
\DeclareMathOperator{\col}{col}
\definecolor{mplblue}{HTML}{1F77B4}

\definecolor{mplred}{HTML}{FF0000}
\definecolor{safebg}{HTML}{DDDDFF}
\allowdisplaybreaks

\newcommand{\bicycleDualUs}{120.87}      
\newcommand{\bicycleAdUs}{140.33}        
\newcommand{\bicycleAdBytes}{800}        

\newcommand{\VdPDualUs}{327.88}
\newcommand{\VdPAdUs}{280.15}

\newcommand{\VdPAdBytes}{1568}

\newcommand{\pendDualUs}{514.56}
\newcommand{\pendAdUs}{983.45}

\newcommand{\pendAdBytes}{800}

\newcommand{\bicycleDualMaxUs}{126.92}

\newcommand{\bicycleAdMaxUs}{239.27}
\newcommand{\bicycleDualStatic}{384}

\newcommand{\VdPDualMaxUs}{334.03}

\newcommand{\VdPAdMaxUs}{372.08}
\newcommand{\VdPDualStatic}{768}

\newcommand{\pendDualMaxUs}{534.93}

\newcommand{\pendAdMaxUs}{1002.19}
\newcommand{\pendDualStatic}{512}

\title{Extracting Exact Lie Derivatives Without Backpropagation: A Dual Compiler for Neural Control Barrier Functions}
\author{Mohammadreza Kamaldar$^{1}$%
	\thanks{This work has been submitted to the Elsevier for possible publication. Copyright may be transferred without notice, after which this version may no longer be accessible.}%
	\thanks{$^{1}$M.\ Kamaldar is with the Department of Mechanical, Aerospace, and Biomedical Engineering, University of South Alabama, 150 Student Services, Mobile, AL 36688, USA {\tt\small (mkamaldar@southalabama.edu)}}%
}

\begin{document}
	\maketitle
	
	\begin{abstract}
		This paper presents a dual-algebraic compiler that evaluates neural control barrier functions (CBFs) and their exact Lie derivatives using forward-mode dual-number arithmetic. Emitting self-contained C++ code, the compiler executes a single forward pass to extract the barrier value and Jacobian--vector-field product without backpropagation, alongside a hyper-dual extension for exact second-order Lie derivatives.
		
		Learning-enabled safety filters must operate at kilohertz rates under the strict memory and worst-case execution time (WCET) constraints required for certification. Standard reverse-mode automatic differentiation violates these standards by allocating dynamic graphs on the heap and building depth-dependent activation caches.
		
		Our architecture resolves this conflict by confining the scratch workspace to twice the widest layer, remaining entirely independent of network depth. By eliminating dynamic allocation, the framework makes memory safety verifiable by inspection rather than testing. Validated on a bare-metal ESP32-S3, the compiler assembles the complete safety constraint in sub-millisecond time from a static 768-byte buffer. It bounds WCET within 5\% of the median, defeating the 70\% jitter of runtime baselines.
	\end{abstract}

\section{Introduction}
\label{sec:intro}

\subsection{Motivation}

Learning-enabled components are moving from the data center into the control loops of embedded systems, and safety filters are among the first to make the move. A safety filter based on a control barrier function (CBF) keeps a robot or vehicle within a safe operating region by minimally modifying a nominal control command, and when the geometry of the safe set is complex, the barrier function is parameterized by a feedforward neural network trained from data \cite{ames2017cbf, ames2019cbf_qp, dawson2023safe, so2023ncbf, robey2020learning, taylor2020learning}. Deploying such a filter places a specific computational obligation on the target processor: at every control cycle, often at kilohertz rates, the microcontroller must evaluate the trained network \emph{and} its Lie derivatives along the system vector fields, since these derivatives are the coefficients of the affine safety constraint enforced on the input. Finite-difference approximations are unsuitable because their truncation errors can violate the barrier inequality and drive the state into an unsafe region; the derivatives must be exact.

From a systems-architecture standpoint, this workload exposes a gap in the embedded machine-learning stack. Frameworks such as TensorFlow Lite Micro solve the static-memory \emph{inference} problem \cite{david2021tflite}, but they do not extract derivatives. The standard mechanism for exact derivatives, reverse-mode automatic differentiation (AD) or backpropagation \cite{baydin2018ad_survey, rumelhart1986backprop, goodfellow2016deep, paszke2019pytorch, abadi2016tensorflow}, is architecturally mismatched to embedded targets in two ways. Structurally, its backward sweep consumes per-layer caches in the opposite order from that in which the forward sweep produces them, so the workspace of \emph{any} reverse-mode implementation grows with the sum of the hidden-layer widths. Practically, the general-purpose AD runtimes in which barrier networks are trained construct the computational graph dynamically on the heap at every call. Dynamic allocation is precisely what safety-oriented coding standards restrict or prohibit \cite{bagnara2018misra, debouk2019overview}: while a hard real-time control loop tolerates execution-time variation as long as it meets its deadline, exhausting memory causes immediate and total system failure \cite{buttazzo2023hard, kopetz2022real}, and heap-induced timing variability obstructs the worst-case execution time (WCET) analysis on which certification of embedded flight and automotive controllers rests \cite{wilhelm2008worst, cullmann2010predictability}.

This paper closes the gap with an ahead-of-time compiler built on a classical algebraic tool: dual numbers \cite{clifford1871dual, wengert1964fast, griewank2008ad}. A dual number $z = a + b\eps$ carries a real part $a$ and a nilpotent part $b$ with $\eps^2 = 0$; evaluating a differentiable function at a dual argument yields the function value in the real component and the exact directional derivative in the nilpotent component. Embedding this arithmetic layer by layer into a trained feedforward CBF network turns derivative extraction into a memory-access pattern that embedded hardware favors: a single forward sweep in which every layer reads the preceding dual state from one statically placed array and overwrites it in place. The scratch workspace is two floats per neuron of the widest layer, independent of network depth; there is no backward traversal, no graph, and no cache whose lifetime spans the evaluation. The compiler emits self-contained C++ with the weights embedded as \texttt{constexpr} arrays, no recursion, and no allocation call sites, so conformance with allocation-prohibiting coding standards is verifiable by inspection of the generated code rather than by testing. On a bare-metal ESP32-S3, this design assembles the complete safety constraint in sub-millisecond time from a static buffer of at most $768$~bytes, with worst-case execution within $5\%$ of the median, whereas a heap-allocating reverse-mode baseline on the same silicon exhibits execution-time jitter of up to $70\%$ (Section~\ref{sec:examples}).

\subsection{Related Work}
\label{sec:related}

Recently, the systems architecture community has shown increasing interest in the intersection of neural network deployment, formal safety verification, and hardware constraints for cyber-physical systems (CPS). This includes the formal design of safety-critical CPS \cite{yin2026formal}, the efficient execution of neural models on embedded CPUs \cite{xiao2026efficient}, the analysis of adversarial vulnerabilities in autonomous control \cite{boloor2020attacking}, and the data-driven synthesis and transfer of control barrier certificates \cite{ma2025datadriven, li2026wasctl}, as well as the statistical analysis of execution-time variability in systems aiming for safety certification \cite{galarraga2024toward}. Building upon this broader context, three specific research threads border this work.

The first is embedded machine-learning systems. Inference frameworks for microcontrollers, exemplified by TensorFlow Lite Micro \cite{david2021tflite}, execute trained networks within statically planned memory and are deployed widely on the class of hardware targeted here; they address inference only, whereas the CBF safety constraint requires exact Jacobian--vector-field products at every cycle. The second thread is automatic differentiation. Forward-mode differentiation via dual numbers is classical: it underlies the earliest automatic derivative evaluation programs \cite{wengert1964fast} and is treated comprehensively in the AD literature \cite{griewank2008ad, baydin2018ad_survey}, with hyper-dual numbers extending the construction to exact second derivatives \cite{fike2011optimization}. Dual and hyper-dual algebras also have a long history in kinematics \cite{bottema1990theoretical}; recent work computes Lie derivatives of dual-quaternion kinematic maps for parallel robots \cite{montgomery2023lie}, employs dual-quaternion neural networks for aerial-vehicle modeling \cite{suarez2025dualquat}, and studies dual-valued neural networks whose weights and signals are themselves dual numbers \cite{okawa2021dual, kozlov2022dualvalued}. The third thread is learned safety certificates: neural CBF methods \cite{dawson2023safe, so2023ncbf, robey2020learning, taylor2020learning} train and evaluate barrier networks inside desktop AD frameworks and leave the embedded derivative-extraction problem open.

The present work claims no novelty for forward-mode dual differentiation itself. Its contribution is the transfer of this classical tool into the embedded systems stack, together with the analysis and tooling that make the transfer sound. Three gaps at the intersection of the threads above motivate this transfer. First, embedded inference frameworks stop at inference; no counterpart exists for the exact-derivative workload that learned safety monitors impose. Second, the AD literature analyzes forward and reverse mode for generic computational graphs, whereas the CBF constraint has a specific structure, namely $m+1$ Jacobian--vector-field products sharing one primal evaluation, whose workspace and operation counts under embedded memory constraints merit closed-form treatment, including the non-smooth activations that trained safety networks actually use. Third, dual-valued network research modifies the network itself, whereas the present compiler leaves the trained network untouched and changes only the evaluation algebra, preserving the certificate that training produced.

\subsection{Contributions}

This paper makes the following contributions:
\begin{enumerate}[label=\it\roman*)]
	\item \emph{Compiler artifact.} We present an ahead-of-time compiler that translates trained PyTorch and ONNX barrier networks into self-contained C++ headers with \texttt{constexpr}-embedded weights, a compile-time-sized workspace, no recursion, and no allocation call sites, so that the absence of dynamic memory is auditable by inspection of the emitted code (Section~\ref{sec:examples}).
	\item \emph{Memory and cost analysis.} We prove that the dual pass requires a scratch workspace of $2\max_i n_i$ floats, independent of depth, whereas any reverse-mode implementation requires an activation cache that grows as $\sum_i n_i$ (Proposition~\ref{prop:memory}), we derive closed-form operation counts for assembling the complete CBF constraint (Proposition~\ref{prop:dual_cost}), and we state explicitly which disadvantages of reverse mode are structural and which are artifacts of runtime-style implementations (Remark~\ref{rem:static_reverse}).
	\item \emph{Correctness foundation.} We prove that the dual-extended network returns the exact Jacobian--vector-field product (Theorem~\ref{thm:main}) under differentiability assumptions stated at the encountered preactivations only, the sharpest condition compatible with the ReLU networks used in practice, with the treatment of activation kinks made explicit (Remark~\ref{rem:kinks}); this exactness is what guarantees that the compiled filter enforces the same constraint that the trained certificate defines.
	\item \emph{Second-order extension and bare-metal characterization.} We extend the framework through hyper-dual arithmetic \cite{fike2011optimization} to the exact second-order Lie derivatives required by high-order CBFs (Theorem~\ref{thm:hyper_lie}), and we characterize median and worst-case execution times, jitter, static footprints, and closed-loop behavior on a bare-metal ESP32-S3 across three nonlinear systems (Section~\ref{sec:examples}).
\end{enumerate}

\subsection{Software Availability}
\label{sec:software}

The compiler is released as the open-source Python package \texttt{dual-cbf-compiler}, installable from PyPI via \texttt{pip install dual-cbf-compiler}, with source code at \url{https://github.com/mkamaldar/dual_cbf_compiler}. The embedded validation experiments of Section~\ref{sec:examples}, including training scripts, the reverse-mode baseline, ESP-IDF projects, and measurement tooling, are available at \url{https://github.com/mkamaldar/dual_cbf_esp32_experiments}.

\section{Notation}
\label{sec:notation}
Let $\BBN$, $\BBZ_+$, and $\BBR$ denote the sets of natural numbers, positive integers, and real numbers. Let $\BBR^n$ denote the $n$-dimensional Euclidean space and $\BBR^{n\times m}$ the set of $n\times m$ real matrices. For all $x\in\BBR^n$, let $x_{(i)}$ denote the $i$th component of $x$; for all $A\in\BBR^{n\times m}$, let $\col_j(A)$ denote the $j$th column of $A$. Let $\mathbf{1}_m \in \BBR^m$ denote the all-ones vector, $I_n$ the $n\times n$ identity matrix, and $\diag(x)\in\BBR^{n\times n}$ the diagonal matrix whose $i$th diagonal entry is $x_{(i)}$. Let $\mathrm{vec}(\cdot)$ denote the column-stacking vectorization operator, and let $\odot$ denote the elementwise Hadamard product.

Let $\SD\subseteq\BBR^n$ be open, and let $\varphi\colon\SD\to\BBR^p$ be differentiable. The Jacobian of $\varphi$ is $\varphi'\colon\SD\to\BBR^{p\times n}$ defined by $\varphi'(x)\triangleq\partial\varphi(x)/\partial x$. If $p=1$, we write $\nabla\varphi(x)\triangleq(\varphi'(x))^\rmT\in\BBR^n$ for the gradient of $\varphi$. The Lie derivative of $h\colon\BBR^n\to\BBR$ along $f\colon\BBR^n\to\BBR^n$ is $L_f h(x)\triangleq(\nabla h(x))^\rmT f(x)\in\BBR$, and the Lie derivative of $h$ along $G\colon\BBR^n\to\BBR^{n\times m}$ is $L_G h(x)\triangleq(\nabla h(x))^\rmT G(x)\in\BBR^{1\times m}$. If $\varphi$ is twice continuously differentiable and $v,w\in\BBR^n$, we write $\varphi''(x)[v,w]\in\BBR^p$ for the bilinear action of the Hessian tensor of $\varphi$ at $x$ on the pair $(v,w)$, with components $[\varphi''(x)[v,w]]_{(k)}\triangleq v^\rmT\nabla^2\varphi_{(k)}(x)\,w$ for all $k\in\{1,\ldots,p\}$; for scalar-valued $\varphi$ this reduces to $\varphi''(x)[v,w]=v^\rmT\nabla^2\varphi(x)\,w\in\BBR$. A continuous function $\alpha\colon(-b,a)\to(-\infty,\infty)$, where $a,b>0$, is an extended class $\SK$ function if it is strictly increasing and $\alpha(0)=0$.

\section{Problem Formulation}
\label{sec:problem}

\subsection{Neural Control Barrier Functions}
\label{sec:ncbf}

Consider a nonlinear affine control system governed by the continuous-time dynamics
\begin{equation}
	\dot{x}(t)=f(x(t))+G(x(t))u(t),
	\label{eq:dynamics}
\end{equation}
where $t\ge0$, $x(t)\in\SD\subseteq\BBR^n$ is the state, $u(t)\in\SU\subseteq\BBR^m$ is the control input, $\SD$ is open, $\SU$ is compact, and the vector fields $f\colon\SD\to\BBR^n$ and $G\colon\SD\to\BBR^{n\times m}$ are locally Lipschitz continuous.

\begin{assumption}\label{assum:dynamics}
	The functions $f$ and $G$ are continuously differentiable on $\SD$.
\end{assumption}

Under Assumption~\ref{assum:dynamics}, for every initial condition $x_0\in\SD$ and every locally integrable control signal $u\colon[0,\infty)\to\SU$, there exist $T=T(x_0,u)>0$ and a unique maximal solution $x\colon[0,T)\to\SD$ of~\eqref{eq:dynamics} with $x(0)=x_0$~\cite{khalil2002nonlinear}.

Let $h\colon\SD\to\BBR$ be continuously differentiable, and define the safe set $\SC\triangleq\{x\in\SD\colon h(x)\geq 0\}$. The total time derivative of $h$ along solutions of~\eqref{eq:dynamics} decomposes as
\begin{equation}
	\dot{h}(x,u)=L_f h(x)+L_G h(x)u.
	\label{eq:hdot}
\end{equation}

\begin{definition}\label{def:cbf}
	The function $h$ is a control barrier function~\cite{ames2017cbf} for~\eqref{eq:dynamics} on $\SC$ if there exists an extended class $\SK$ function $\alpha$ such that, for all $x\in\SC$,
	\begin{equation}
		\sup_{u\in\SU}\big[L_f h(x)+L_G h(x)u\big]\geq-\alpha(h(x)).
		\label{eq:cbf_condition}
	\end{equation}
\end{definition}

If $h$ is a CBF for~\eqref{eq:dynamics} on $\SC$, then for every locally Lipschitz continuous controller $u\colon\SD\to\SU$ satisfying $L_f h(x)+L_G h(x)u(x)\geq-\alpha(h(x))$ on $\SD$, the set $\SC$ is forward invariant with respect to~\eqref{eq:dynamics}~\cite[Thm.~2]{ames2017cbf}. A standard approach for synthesizing a safe controller is to solve, at each time step, the safety-critical quadratic program (QP)
\begin{align}
	u_*(x)&\triangleq\arg\min_{u\in\SU}\|u-u_\mathrm{n}(x)\|^2\nn\\
	&\;\;\mathrm{s.t.}\;\;L_f h(x)+L_G h(x)u\geq-\alpha(h(x)),
	\label{eq:qp}
\end{align}
where $u_\mathrm{n}\colon\SD\to\BBR^m$ is a nominal controller. If $\SU$ is convex, then affineness of the constraint in $u$ renders~\eqref{eq:qp} a convex QP that embedded solvers such as OSQP~\cite{stellato2020osqp} and qpOASES~\cite{ferreau2014qpoases} handle within a real-time budget.

We parameterize the CBF using a feedforward neural network $h_\theta\colon\SD\to\BBR$. Let $d\in\BBZ_+$ be the network depth, and let $n_0,n_1,\dots,n_d\in\BBZ_+$ denote the layer widths, with $n_0\triangleq n$ and $n_d\triangleq 1$. The network is the composite mapping
\begin{equation}
	h_\theta \triangleq \ell_d \circ \sigma_{d-1} \circ \ell_{d-1} \circ \cdots \circ \sigma_1 \circ \ell_1,
	\label{eq:nn_def}
\end{equation}
where, for each layer index $i\in\{1,\ldots,d\}$, the affine transformation $\ell_i\colon\BBR^{n_{i-1}}\to\BBR^{n_i}$ maps an input vector $z\in\BBR^{n_{i-1}}$ via $\ell_i(z)\triangleq W_i z+b_i$, with weight matrix $W_i\in\BBR^{n_i\times n_{i-1}}$ and bias vector $b_i\in\BBR^{n_i}$. For each hidden-layer index $i\in\{1,\ldots,d-1\}$, the nonlinear function $\sigma_i\colon\BBR^{n_i}\to\BBR^{n_i}$ applies a scalar activation $\bar\sigma_i\colon\BBR\to\BBR$ elementwise, so that, for all $\zeta\in\BBR^{n_i}$,
\begin{equation}
	\sigma_i(\zeta)\triangleq\matl\bar\sigma_i(\zeta_{(1)}) & \cdots & \bar\sigma_i(\zeta_{(n_i)})\matr^\rmT.
	\label{eq:componentwise_activation}
\end{equation}
We require $\bar\sigma_i$ to be differentiable almost everywhere, a condition satisfied by the rectified linear unit (ReLU), hyperbolic tangent, sigmoid, and softplus activations; the pointwise assumptions under which the compiler returns exact derivatives are stated where they are used (Lemma~\ref{lem:dual_activation}, Theorem~\ref{thm:main}, and Remark~\ref{rem:kinks}). The parameter vector $\theta\in\BBR^{n_\theta}$ collects the weights and biases across all $d$ layers as
\begin{equation}
	\theta\triangleq\matl\mathrm{vec}(W_1)^\rmT & b_1^\rmT & \cdots & \mathrm{vec}(W_d)^\rmT & b_d^\rmT\matr^\rmT\in\BBR^{n_\theta},
	\label{eq:theta_stack}
\end{equation}
where the total parameter dimension is
\begin{equation}
	n_\theta\triangleq\sum_{i=1}^{d}(n_i n_{i-1}+n_i).
	\label{eq_ntheta}
\end{equation}

\subsection{Reverse-Mode Baseline}
\label{sec:reverse_mode}

Formulating the control input via the quadratic program~\eqref{eq:qp} requires evaluating the barrier value $h_\theta$, the drift Lie derivative $L_f h_\theta$, and the input Lie derivative $L_G h_\theta$ at each time step. Deep learning frameworks extract the required gradient $\nabla h_\theta$ using reverse-mode AD. To characterize this computational baseline, for all $i\in\{1,\ldots,d\}$, let $a_i\in\BBR^{n_i}$ denote the pre-activation vector emerging from $\ell_i$, let $\hat a_i\triangleq\sigma_i(a_i)\in\BBR^{n_i}$ denote the post-activation vector, and let $\delta_i\triangleq\nabla_{\hat a_i} h_\theta\in\BBR^{n_i}$ denote the intermediate gradient vector. Initializing $\delta_d\triangleq 1$ and adopting the linear convention $\sigma_d'(a_d)\triangleq 1$, the backward pass extracts $\nabla h_\theta$ by evaluating, for all $i\in\{d,\ldots,1\}$, the vector-Jacobian products
\begin{equation}
	\delta_{i-1} = W_i^\rmT\diag(\sigma_i'(a_i))\,\delta_i,
	\label{eq:backprop}
\end{equation}
terminating with $\nabla h_\theta=\delta_0$.

Evaluating~\eqref{eq:backprop} requires the backward sweep to reconstruct the activation Jacobians $\diag(\sigma_i'(a_i))$ for every hidden layer, which forces the forward pass to retain derivative-sufficient information about each preactivation until the backward sweep consumes it. Two consequences follow. Structurally, the workspace of \emph{any} reverse-mode implementation---whether interpreted at runtime or compiled ahead of time---grows with the sum of the hidden-layer widths, because all hidden layers must be cached simultaneously. Practically, general-purpose AD runtimes additionally build the computational graph dynamically on the heap at every call, which introduces allocation latency and fragmentation that obstruct WCET analysis. The objective of this paper is to compute $h_\theta$, $L_f h_\theta$, and $L_G h_\theta$ through forward evaluation alone, with a workspace that is independent of network depth and an implementation that is free of dynamic allocation by construction.

\section{Dual Number Algebra}
\label{sec:dual_algebra}

The reverse-mode bottleneck identified in Section~\ref{sec:reverse_mode} stems from the separation between evaluating the network and differentiating it. This section establishes the algebraic tool that removes the separation: an arithmetic in which every operation transports the derivative alongside the value, so that differentiation completes when evaluation does.

\begin{definition}[{\cite{clifford1871dual}, \cite[Sec.~12.1]{bottema1990theoretical}}]\label{def:dual}
	A dual number is an element $z\triangleq a+b\eps$, where $a\in\BBR$ is the real part, $b\in\BBR$ is the dual part, and $\eps\notin\BBR$ is a nilpotent algebraic element adjoined to the real numbers, satisfying $\eps\neq 0$ and $\eps^2=0$.
\end{definition}

The set of all dual numbers spans the two-dimensional commutative algebra $\BBD\triangleq\{a+b\eps\colon a,b\in\BBR\}$.

\begin{definition}\label{def:dual_arith}
	For all $z_1\triangleq a_1+b_1\eps\in\BBD$ and $z_2\triangleq a_2+b_2\eps\in\BBD$, addition and multiplication are defined by
	\begin{align}
		z_1+z_2 &\triangleq (a_1+a_2)+(b_1+b_2)\eps,\label{eq:dual_add}\\
		z_1 z_2 &\triangleq a_1 a_2+(a_1 b_2+a_2 b_1)\eps.\label{eq:dual_mul}
	\end{align}
\end{definition}

The real part operator $\Real\colon\BBD\to\BBR$ and the dual part operator $\Dual\colon\BBD\to\BBR$ are defined by $\Real(a+b\eps)\triangleq a$ and $\Dual(a+b\eps)\triangleq b$.

The following algebraic properties formalize the dual number system as a commutative ring with zero divisors, a structure originating in 19th-century geometry~\cite{clifford1871dual} and used extensively in modern automatic differentiation~\cite[Sec.~12.1]{bottema1990theoretical}\cite[Sec.~2.1--2.2]{griewank2008ad}.

\begin{fact}\label{fact:ring}
	The triple $(\BBD,+,\cdot\,)$ is a commutative ring with additive identity $0+0\eps$ and multiplicative identity $1+0\eps$. The dual numbers do not form a field: the subset $\{b\eps\colon b\in\BBR,\,b\neq 0\}$ consists of zero divisors possessing no multiplicative inverse.
\end{fact}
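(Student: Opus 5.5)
Because the operations in Definition~\ref{def:dual_arith} act on the pair $(a,b)\in\BBR^2$ componentwise for addition and as a bilinear product for multiplication, I will verify the ring axioms by reducing each one to identities in $\BBR$. I will then show the failure of the field property by exhibiting zero divisors directly.

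\emph{Additive structure.} Identify $a+b\eps$ with $(a,b)\in\BBR^2$. Under this identification, \eqref{eq:dual_add} is vector addition in $\BBR^2$. Associativity and commutativity of addition, the identity $0+0\eps$, and the inverse $-a-b\eps$ therefore follow at once from the corresponding properties of $\BBR$, so $(\BBD,+)$ is an abelian group.

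\emph{Multiplicative structure.} Commutativity of \eqref{eq:dual_mul} is immediate, since $a_1a_2=a_2a_1$ and $a_1b_2+a_2b_1$ is symmetric under swapping the indices. For the identity, $(1+0\eps)(a+b\eps)=a+(1\cdot b+a\cdot 0)\eps=a+b\eps$. For associativity, I will expand both $(z_1z_2)z_3$ and $z_1(z_2z_3)$ using \eqref{eq:dual_mul}. Each has real part $a_1a_2a_3$ and dual part $a_1a_2b_3+a_1b_2a_3+b_1a_2a_3$, so they coincide. For distributivity, $z_1(z_2+z_3)$ has real part $a_1(a_2+a_3)$ and dual part $a_1(b_2+b_3)+(a_2+a_3)b_1$. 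Regrouping these terms gives exactly the sum $z_1z_2+z_1z_3$ computed from \eqref{eq:dual_mul}. Right distributivity then follows from commutativity. I expect the associativity expansion to be the only step with any bookkeeping, and it is still routine.

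\emph{Not a field.} Take $z=b\eps$ with $b\neq0$. Then $z\cdot z=0\cdot0+(0\cdot b+0\cdot b)\eps=0$, so $z$ is a nonzero zero divisor.

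Now suppose some $w=c+d\eps$ satisfied $zw=1$. By \eqref{eq:dual_mul}, $zw=0\cdot c+(0\cdot d+c\,b)\eps=cb\,\eps$. Its real part is $0\neq1$, which is a contradiction. Hence no element of $\{b\eps: b\neq0\}$ is invertible.

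Since $1+0\eps\neq0+0\eps$ and these nonzero elements lack inverses, $\BBD$ is not a field.
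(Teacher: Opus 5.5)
Your proof is correct. The paper does not prove Fact~\ref{fact:ring} at all: it records it as a classical fact and defers to the references cited just before it. Your argument is therefore a genuinely different, self-contained route rather than a reconstruction of an in-paper proof.

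You identify $a+b\eps$ with $(a,b)\in\BBR^2$ and check each axiom directly against \eqref{eq:dual_add} and \eqref{eq:dual_mul}. This shows that the operations of Definition~\ref{def:dual_arith} themselves define a commutative ring, rather than relying on the informal description of $\eps$ as an adjoined nilpotent element. Your associativity and distributivity expansions are right. You also correctly record $1+0\eps\neq 0+0\eps$, which a field would require.

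The paper's citation buys brevity. Your verification makes the fact checkable in the paper's own notation.

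If you want something shorter that is still self-contained, send $a+b\eps$ to the coset of $a+bt$ in $\BBR[t]/(t^2)$. This is a bijection, since each coset has a unique representative of degree less than two. It respects \eqref{eq:dual_mul}, because $(a_1+b_1t)(a_2+b_2t)\equiv a_1a_2+(a_1b_2+a_2b_1)t$ modulo $t^2$. The commutative ring axioms are then inherited from the polynomial ring, and you trade the associativity bookkeeping for a one-line congruence.

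Likewise, the non-invertibility of $z=b\eps$ follows abstractly from $z^2=0$. If $zw=1$, then $z=z(zw)=z^2w=0$, which is a contradiction. So you do not need the separate real-part computation, although yours is also valid.
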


Forward-mode differentiation rests on extending real functions to dual arguments so that the dual component transports the exact derivative~\cite{wengert1964fast}\cite[Ch.~13]{griewank2008ad}. We define this extension directly and then establish that it is consistent with the ring operations of $\BBD$, which is the property that layer-by-layer propagation requires.

\begin{definition}\label{def:dual_ext}
	Let $\varphi\colon I\subseteq\BBR\to\BBR$ be differentiable at $a\in I$. The dual extension of $\varphi$ at $z=a+b\eps\in\BBD$ is
	\begin{equation}
		\tilde\varphi(a+b\eps)\triangleq\varphi(a)+\varphi'(a)b\eps,
		\label{eq:dual_RD}
	\end{equation}
	so that $\Real(\tilde\varphi(z))=\varphi(a)$ and $\Dual(\tilde\varphi(z))=\varphi'(a)b$.
\end{definition}

The dual extension is not itself a ring homomorphism of $\BBD$ (no nonlinear $\varphi$ is additive), but it is compatible with the dual arithmetic of Definition~\ref{def:dual_arith} in the following exact sense: sums, products, and compositions of extended functions obey the sum, Leibniz, and chain rules of calculus with zero error. The proof is provided in Appendix~\ref{app:thm_exact_diff}.

\begin{proposition}\label{thm:exact_diff}
	Let $\varphi,\psi\colon I\subseteq\BBR\to\BBR$ be differentiable at $a\in I$, and let $z\triangleq a+b\eps\in\BBD$. Then, the following statements hold:
	\begin{enumerate}[label={\it{\roman*}})]
		\item\label{ted_p1} $\widetilde{(\varphi+\psi)}(z)=\tilde\varphi(z)+\tilde\psi(z)$.
		\item\label{ted_p2} $\widetilde{(\varphi\psi)}(z)=\tilde\varphi(z)\,\tilde\psi(z)$.
		\item\label{ted_p3} If, in addition, $\psi$ is differentiable at $\varphi(a)$, then $\widetilde{(\psi\circ\varphi)}(z)=\tilde\psi(\tilde\varphi(z))$.
		\item\label{ted_p4} If $\varphi$ is a polynomial, then $\tilde\varphi(z)$ coincides with the evaluation of that polynomial at $z$ using the dual arithmetic~\eqref{eq:dual_add} and~\eqref{eq:dual_mul}.
	\end{enumerate}
\end{proposition}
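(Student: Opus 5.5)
The plan is to verify each identity by computing both sides from Definition~\ref{def:dual_ext} and Definition~\ref{def:dual_arith}, and to compare real and dual parts separately. The facts from elementary calculus needed are the sum, product, and chain rules for functions differentiable at a single point, together with the ring properties in Fact~\ref{fact:ring}.

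For \ref{ted_p1}, $\varphi+\psi$ is differentiable at $a$ with derivative $\varphi'(a)+\psi'(a)$. So
\[
\widetilde{(\varphi+\psi)}(z)=(\varphi(a)+\psi(a))+(\varphi'(a)+\psi'(a))b\eps .
\]
By~\eqref{eq:dual_add}, this is exactly $\tilde\varphi(z)+\tilde\psi(z)$.

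For \ref{ted_p2}, the product rule gives $(\varphi\psi)'(a)=\varphi'(a)\psi(a)+\varphi(a)\psi'(a)$. By~\eqref{eq:dual_mul}, the product $\tilde\varphi(z)\tilde\psi(z)$ has real part $\varphi(a)\psi(a)$ and dual part $\varphi(a)\psi'(a)b+\psi(a)\varphi'(a)b$, which matches.

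For \ref{ted_p3}, note first that $\tilde\varphi(z)=\varphi(a)+\varphi'(a)b\eps$ has real part $\varphi(a)$. This is precisely the point at which $\psi$ is assumed differentiable, so $\tilde\psi(\tilde\varphi(z))$ is well defined. It equals $\psi(\varphi(a))+\psi'(\varphi(a))\varphi'(a)b\eps$. The pointwise chain rule (differentiability of $\varphi$ at $a$ and of $\psi$ at $\varphi(a)$ suffices) identifies this with $\widetilde{(\psi\circ\varphi)}(z)$.

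The one point needing care is the domain. For the composition to make sense near $a$, one needs $\varphi(a)$ to lie in $\psi$'s domain; I would read the hypothesis as implicitly requiring this.

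For \ref{ted_p4}, I would argue by induction on polynomial structure rather than on degree alone. Constants $c$ extend to $c+0\eps$, because their derivative is zero. The identity map extends to $a+b\eps=z$ itself. Every polynomial is built from constants and the identity by finitely many sums and products, and at each step \ref{ted_p1} and \ref{ted_p2} show that the extension of the built polynomial equals the dual-arithmetic combination of the extensions of its parts. Because $\BBD$ is a commutative ring (Fact~\ref{fact:ring}), evaluating the polynomial at $z$ is independent of how it is parenthesized. Hence evaluating $\sum_k c_k z^k$ in dual arithmetic coincides with $\tilde\varphi(z)$.

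A direct check is also available: by the binomial theorem and $\eps^2=0$, $(a+b\eps)^k=a^k+ka^{k-1}b\eps$, and linearity finishes the argument.

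I expect no step to be a real obstacle. The only subtlety is the well-definedness and domain bookkeeping in \ref{ted_p3}, together with the observation that the statements are pointwise, so no differentiability is needed beyond the single points $a$ and $\varphi(a)$.
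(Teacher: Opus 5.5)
Your proposal is correct and matches the paper's proof essentially step for step: parts (i)--(iii) are checked exactly as in the appendix, by comparing real and dual parts under the sum, Leibniz, and pointwise chain rules. For part (iv), your structural induction from constants and the identity via (i)--(ii) repackages the paper's induction on monomials $z^{k+1}=z^k z$, whose inductive step is just (ii) applied to $x^k\cdot x$, and your direct binomial check is the paper's argument.
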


\begin{remark}\label{rem:taylor_exact}
	Part~\ref{ted_p4} of Proposition~\ref{thm:exact_diff} explains why operator-overloading implementations of dual arithmetic differentiate exactly: executing the original floating-point program on dual operands evaluates, operation by operation, precisely the extension~\eqref{eq:dual_RD}. It also distinguishes dual-number differentiation from finite-difference approximation. Finite differences truncate the Taylor series at finite order, introducing $O(\delta)$ or $O(\delta^2)$ errors, whereas the nilpotency $\eps^2=0$ annihilates all higher-order terms and yields the derivative with zero truncation error.\exmark
\end{remark}

A dual vector $z\triangleq a+b\eps\in\BBD^n$ and a dual matrix $M\triangleq A+B\eps\in\BBD^{n\times m}$ extend the scalar arithmetic componentwise. Embedding a real matrix $W\in\BBR^{n\times m}$ as $W+0\eps$ and multiplying it against a dual vector $z=a+b\eps\in\BBD^m$ yields $Wz=Wa+Wb\eps$, at a cost of two standard matrix-vector multiplications.

\begin{corollary}\label{cor:vec_ext}
	Let $\varphi\colon\SD\subseteq\BBR^n\to\BBR^p$ be differentiable at $x\in\SD$, and define the dual extension $\tilde\varphi(x+v\eps)\triangleq\varphi(x)+\varphi'(x)v\eps$, which satisfies $\Real(\tilde\varphi(x+v\eps))=\varphi(x)$ and $\Dual(\tilde\varphi(x+v\eps))=\varphi'(x)v$. Moreover, for mappings $\varphi\colon\SD\to\mathcal{E}$ differentiable at $x$ and $\psi\colon\mathcal{E}\to\BBR^q$ differentiable at $\varphi(x)$,
	\begin{equation}
		\tilde\psi\big(\tilde\varphi(x+v\eps)\big)=(\psi\circ\varphi)(x)+(\psi\circ\varphi)'(x)v\eps.
		\label{eq:vec_chain}
	\end{equation}
\end{corollary}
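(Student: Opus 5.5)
The first two claims of Corollary~\ref{cor:vec_ext} hold by construction. By definition, $\tilde\varphi(x+v\eps)=\varphi(x)+\varphi'(x)v\eps$, so reading off the components under the componentwise extension of $\Real$ and $\Dual$ gives $\Real(\tilde\varphi(x+v\eps))=\varphi(x)$ and $\Dual(\tilde\varphi(x+v\eps))=\varphi'(x)v$. The dual vector $\varphi(x)+\varphi'(x)v\eps$ lies in $\BBD^p$, so the real part is an element of $\mathcal{E}$. For the composition statement I would interpret $\mathcal{E}\subseteq\BBR^p$ as the codomain, with $\psi$ defined on a neighborhood of $\varphi(x)$ so that differentiability there makes sense.

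The main step is the chain rule~\eqref{eq:vec_chain}. I would apply the definition of the vector dual extension twice. Set $y\triangleq\varphi(x)$ and $w\triangleq\varphi'(x)v\in\BBR^p$. Then $\tilde\varphi(x+v\eps)=y+w\eps$. Since $\psi$ is differentiable at $y$, its dual extension at this dual point is, by the same definition with $(x,v)$ replaced by $(y,w)$,
\[
\tilde\psi(y+w\eps)=\psi(y)+\psi'(y)w\eps=\psi(\varphi(x))+\psi'(\varphi(x))\varphi'(x)v\eps .
\]
The multivariable chain rule, valid because $\varphi$ is differentiable at $x$ and $\psi$ at $\varphi(x)$, gives $(\psi\circ\varphi)'(x)=\psi'(\varphi(x))\varphi'(x)$. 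Substituting this yields~\eqref{eq:vec_chain}, which is exactly the dual extension of $\psi\circ\varphi$ evaluated at $x+v\eps$.

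I expect no real obstacle here, because the argument mirrors the proof of Proposition~\ref{thm:exact_diff}\ref{ted_p3} with scalars replaced by Jacobian matrices. The one point needing care is well-definedness: the extension $\tilde\psi$ must be evaluated at a dual vector whose real part $\varphi(x)$ is a point where $\psi$ is differentiable, and this is precisely the stated hypothesis. I would also note that no dual arithmetic inside $\psi$ is invoked. The identity is purely definitional plus the classical chain rule, so it holds even for non-polynomial $\psi$. Consistency with operator-overloaded evaluation then follows componentwise from Proposition~\ref{thm:exact_diff} wherever that evaluation is relevant.
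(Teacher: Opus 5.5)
Your proposal is correct and follows the paper's proof essentially step for step. You read off the real and dual parts from the definition, set $y=\varphi(x)$ and $w=\varphi'(x)v$, apply the extension of $\psi$ at $y+w\eps$, and conclude with the multivariate chain rule $(\psi\circ\varphi)'(x)=\psi'(\varphi(x))\varphi'(x)$. Your added point that $\mathcal{E}\subseteq\BBR^p$ must contain a neighborhood of $\varphi(x)$, so that differentiability of $\psi$ there makes sense, is a reasonable clarification that the paper leaves implicit.
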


\begin{proof}
	The stated real and dual components restate the vector-valued form of Definition~\ref{def:dual_ext}. To confirm the composition, define the intermediate real vector $y\triangleq\varphi(x)$ and dual direction $w\triangleq\varphi'(x)v$, and note that
	\begin{align}
		\tilde\psi\big(\tilde\varphi(x+v\eps)\big) &= \tilde\psi(y+w\eps) \nn\\
		&= \psi(y)+\psi'(y)w\eps \nn\\
		&= \psi(\varphi(x))+\psi'(\varphi(x))\varphi'(x)v\eps.
		\label{eq:chain_restore}
	\end{align}
	Substituting the multivariate chain-rule identity $(\psi\circ\varphi)'(x)=\psi'(\varphi(x))\varphi'(x)$, which holds under the stated differentiability hypotheses, into~\eqref{eq:chain_restore} confirms~\eqref{eq:vec_chain}; this is the multivariate counterpart of Proposition~\ref{thm:exact_diff}\ref{ted_p3}.
\end{proof}

\section{Dual Neural Network Compiler}
\label{sec:dual_nn}

This section constructs the exact Lie derivative engine by extending each feedforward layer into the dual domain so that forward propagation through the dual-extended network carries the exact directional derivative alongside the activation.

\subsection{Dual Extension of Network Layers}

\begin{lemma}\label{lem:dual_affine}
	Let $i\in\BBZ_+$, $W\in\BBR^{n_i\times n_{i-1}}$, and $b\in\BBR^{n_i}$, and define the affine transformation $\ell(z)\triangleq Wz+b$. Then, for all $z=a+c\eps\in\BBD^{n_{i-1}}$, the dual extension of $\ell$ evaluates as
	\begin{equation}
		\tilde\ell(a+c\eps) = (Wa+b)+Wc\eps.
		\label{eq:dual_affine}
	\end{equation}
	In particular, $\Real(\tilde\ell(a+c\eps))=\ell(a)$ and $\Dual(\tilde\ell(a+c\eps))=Wc$.
\end{lemma}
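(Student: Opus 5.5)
The plan is to obtain~\eqref{eq:dual_affine} in two ways that agree. The first route uses the vector dual extension of Corollary~\ref{cor:vec_ext}. The second, which is the computationally meaningful one, evaluates $\ell$ directly in the componentwise dual arithmetic of Definition~\ref{def:dual_arith}.

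For the first route, note that $\ell$ is differentiable everywhere on $\BBR^{n_{i-1}}$ with constant Jacobian $\ell'(a)=W$. Substituting $x=a$, $v=c$, and $\varphi=\ell$ into the definition $\tilde\varphi(x+v\eps)\triangleq\varphi(x)+\varphi'(x)v\eps$ gives $\tilde\ell(a+c\eps)=(Wa+b)+Wc\eps$ immediately.

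For the second route, embed $W$ as $W+0\eps$ and $b$ as $b+0\eps$. Fix a row index $k$ and expand
\begin{equation*}
\big[(W+0\eps)(a+c\eps)+(b+0\eps)\big]_{(k)}=\sum_{j}\big(W_{kj}+0\eps\big)\big(a_{(j)}+c_{(j)}\eps\big)+\big(b_{(k)}+0\eps\big).
\end{equation*}
By~\eqref{eq:dual_mul}, each product equals $W_{kj}a_{(j)}+W_{kj}c_{(j)}\eps$, because the cross term $a_{(j)}\cdot 0$ vanishes. Summing with~\eqref{eq:dual_add} then gives $(Wa+b)_{(k)}+(Wc)_{(k)}\eps$. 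This agrees with the remark preceding Corollary~\ref{cor:vec_ext} that $Wz=Wa+Wc\eps$.

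The agreement of the two routes can also be seen from Proposition~\ref{thm:exact_diff}\ref{ted_p4} applied row by row. Each component of $\ell$ is a degree-one polynomial in the entries of $z$, so its extension coincides with evaluation under the dual arithmetic. Strictly, Proposition~\ref{thm:exact_diff}\ref{ted_p4} is stated for scalar arguments, so either the direct componentwise expansion above or Corollary~\ref{cor:vec_ext} should be invoked instead.

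The ``in particular'' clause follows by reading off the real and dual parts with $\Real$ and $\Dual$ componentwise: $\Real(\tilde\ell(a+c\eps))=Wa+b=\ell(a)$ and $\Dual(\tilde\ell(a+c\eps))=Wc$.

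There is no real obstacle here. The only point needing care is the bias: it enters only the real part, because its embedding has zero dual component, so $b$ is absent from $Dual$. This is consistent with the fact that the derivative of a translation is zero.
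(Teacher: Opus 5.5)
Your proposal is correct, and your first route is exactly the paper's proof: invoke the vector dual extension of Corollary~\ref{cor:vec_ext} for $\ell$ and substitute the constant Jacobian $\ell'(a)=W$. Your second, componentwise route is a valid additional check that the dual arithmetic of Definition~\ref{def:dual_arith}, which is what the compiled code actually executes, reproduces the same extension; the paper records this only informally, in the remark just before Corollary~\ref{cor:vec_ext}.
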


\begin{proof}
	Note that Corollary~\ref{cor:vec_ext} implies the dual mapping $\tilde\ell(a+c\eps)=\ell(a)+\ell'(a)c\eps$. Since, in addition, $\ell'(a)=W$ for all $a\in\BBR^{n_{i-1}}$,~\eqref{eq:dual_affine} is confirmed.
\end{proof}

\begin{remark}\label{rem:affine_cost}
	Equation~\eqref{eq:dual_affine} shows that evaluating a dual affine layer requires exactly two matrix-vector multiplications and one vector addition. Both multiplications use the same weight matrix $W$, so no additional parameter memory is required beyond the storage of $W$, and the bias vector $b$ contributes to the real component only.\exmark
\end{remark}

\begin{lemma}\label{lem:dual_activation}
	Let $i\in\BBZ_+$, and let $\sigma\colon\BBR^{n_i}\to\BBR^{n_i}$ be a componentwise activation with scalar activation $\bar\sigma\colon\BBR\to\BBR$, defined in~\eqref{eq:componentwise_activation}. Let $z=a+c\eps\in\BBD^{n_i}$, and assume that, for all $j\in\{1,\ldots,n_i\}$, $\bar\sigma$ is differentiable at $a_{(j)}$. Then,
	\begin{equation}
		\tilde\sigma(a+c\eps) = \sigma(a)+\diag(\sigma'(a))c\eps,
		\label{eq:dual_activation}
	\end{equation}
	where $\sigma'(a)\triangleq[\bar\sigma'(a_{(1)})~\cdots~\bar\sigma'(a_{(n_i)})]^\rmT\in\BBR^{n_i}$.
\end{lemma}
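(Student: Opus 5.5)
The plan is to prove \eqref{eq:dual_activation} componentwise, reducing it to the scalar dual extension of Definition~\ref{def:dual_ext}. The key point is that the dual extension of a componentwise map is itself componentwise. The only care needed is that the pointwise differentiability hypothesis at each $a_{(j)}$ suffices, without assuming differentiability of $\sigma$ on a neighborhood.

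First, I would fix $j\in\{1,\ldots,n_i\}$ and apply Definition~\ref{def:dual_ext} to $\bar\sigma$ at the dual scalar $z_{(j)}=a_{(j)}+c_{(j)}\eps$. This is legitimate because $\bar\sigma$ is assumed differentiable at $a_{(j)}$. It gives $\tilde{\bar\sigma}(a_{(j)}+c_{(j)}\eps)=\bar\sigma(a_{(j)})+\bar\sigma'(a_{(j)})c_{(j)}\eps$.

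Second, I would interpret $\tilde\sigma$ as the componentwise application of $\tilde{\bar\sigma}$, mirroring \eqref{eq:componentwise_activation}. Stacking the $n_i$ scalar identities then gives real part $\sigma(a)$ by \eqref{eq:componentwise_activation}. The dual part is the vector with entries $\bar\sigma'(a_{(j)})c_{(j)}$, which equals $\diag(\sigma'(a))c$, or equivalently $\sigma'(a)\odot c$.

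Finally, I would check consistency with the vector-valued extension of Corollary~\ref{cor:vec_ext}, in case $\tilde\sigma$ is read that way. At points where $\sigma$ is differentiable, its Jacobian is diagonal, since component $k$ depends only on $\zeta_{(k)}$, with diagonal entries $\bar\sigma'(a_{(k)})$. Hence $\sigma'(a)c$ in the Jacobian sense equals $\diag(\sigma'(a))c$.

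The main obstacle is this last step. Differentiability of each $\bar\sigma$ at $a_{(j)}$ does give Fréchet differentiability of $\sigma$ at $a$. To verify this, I would write the remainder componentwise: each component's remainder is $o(|\zeta_{(j)}-a_{(j)}|)\le o(\|\zeta-a\|)$, and a finite sum of such terms is $o(\|\zeta-a\|)$. With that, both readings of $\tilde\sigma$ coincide and \eqref{eq:dual_activation} holds.
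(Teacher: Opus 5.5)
Your proposal is correct and is essentially the paper's argument. The paper's proof notes only that the Jacobian of the componentwise map $\sigma$ at $a$ is $\diag(\sigma'(a))$ and invokes Corollary~\ref{cor:vec_ext}, which is your third step; your check that differentiability of $\bar\sigma$ at each $a_{(j)}$ gives Fr\'echet differentiability of $\sigma$ at $a$ supplies the hypothesis of that corollary, which the paper leaves implicit, and your componentwise reading via Definition~\ref{def:dual_ext} is an equivalent alternative.
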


\begin{proof}
	Since the activation $\sigma$ maps components independently, its Jacobian at $a$ evaluates strictly to the diagonal matrix $\diag(\sigma'(a))$. Applying Corollary~\ref{cor:vec_ext} confirms~\eqref{eq:dual_activation}.
\end{proof}

Applied to the rectified linear unit (ReLU), Lemma~\ref{lem:dual_activation} yields an efficient hardware implementation. With the convention $\relu'(0)\triangleq 0$, the scalar derivative is $\relu'(t)=\mathbf{1}_{t>0}$, and the dual ReLU layer gates the real and dual components identically: for all $j\in\{1,\ldots,n_i\}$,
\begin{equation}
	[\tilde\sigma(a+c\eps)]_{(j)}=\begin{cases}a_{(j)}+c_{(j)}\eps, & a_{(j)}>0,\\ 0+0\eps, & a_{(j)}\leq 0. \end{cases}
	\label{eq:dual_relu}
\end{equation}
Each neuron propagates or blocks both the function value and its directional derivative based on the sign of the real preactivation $a_{(j)}$, and the dual component is never consulted in the gating decision. Analogous formulas hold for the $\tanh$, sigmoid, and softplus activations.

\subsection{Main Result: Exact Lie Derivatives}

The following theorem is the main technical result: evaluating the dual-extended network at an input encoding the system state and a vector field yields the exact Lie derivative in the dual component of the output.

\begin{theorem}\label{thm:main}
	Consider the feedforward neural network $h_\theta\colon\SD\to\BBR$ defined in~\eqref{eq:nn_def}, let $x\in\SD$, and let $\xi\colon\SD\to\BBR^n$ be continuously differentiable. Assume that, for every hidden layer $i\in\{1,\ldots,d-1\}$ and every component $j\in\{1,\ldots,n_i\}$, the scalar activation $\bar\sigma_i$ is differentiable at the preactivation $a_{i(j)}$ encountered during the forward evaluation of $h_\theta$ at $x$. Let
	\begin{equation}
		\tilde h_\theta\triangleq\tilde\ell_d\circ\tilde\sigma_{d-1}\circ\cdots\circ\tilde\sigma_1\circ\tilde\ell_1
		\label{eq:h_tilde_def}
	\end{equation}
	denote the fully dual-extended network. Then, $h_\theta$ is differentiable at $x$, and
	\begin{equation}
		\tilde h_\theta(x+\xi(x)\eps) = h_\theta(x)+L_\xi h_\theta(x)\eps.
		\label{eq:main_result}
	\end{equation}
	In particular, defining the dual seeds $x_f\triangleq x+f(x)\eps$ and, for all $j\in\{1,\ldots,m\}$, $x_{G,j}\triangleq x+\col_j(G(x))\eps$, the exact drift and input Lie derivatives emerge as
	\begin{align}
		\Dual\big(\tilde h_\theta(x_f)\big) &= L_f h_\theta(x),\label{eq_Duhth}\\
		\Dual\big(\tilde h_\theta(x_{G,j})\big) &= [L_G h_\theta(x)]_{(j)}.\label{eq_DuhG}
	\end{align}
\end{theorem}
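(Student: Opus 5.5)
The plan is an induction over the layer composition~\eqref{eq:h_tilde_def}, applying Lemma~\ref{lem:dual_affine} and Lemma~\ref{lem:dual_activation} alternately and chaining them with Corollary~\ref{cor:vec_ext}. Set $v\triangleq\xi(x)\in\BBR^n$ and seed $z_0\triangleq x+v\eps$. Define the partial compositions $\phi_0\triangleq\mathrm{id}$ and $\phi_i\triangleq\ell_i\circ\phi_{i-1}$ or $\sigma_i\circ\phi_{i-1}$, alternating as in~\eqref{eq:nn_def}, with corresponding dual partial compositions $\tilde\phi_i$. The inductive claim is that each $\phi_i$ is differentiable at $x$ and
\begin{equation*}
\tilde\phi_i(z_0)=\phi_i(x)+\phi_i'(x)v\eps .
\end{equation*}
Only the value $\xi(x)$ is used; the continuous differentiability of $\xi$ plays no role beyond making $v$ well defined.

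\emph{Base case and affine steps.} The base case is the identity. For an affine step, $\ell_i$ is differentiable everywhere with Jacobian $W_i$. By the inductive hypothesis and Lemma~\ref{lem:dual_affine},
\begin{equation*}
\tilde\ell_i\big(\phi(x)+\phi'(x)v\eps\big)=\ell_i(\phi(x))+W_i\phi'(x)v\eps ,
\end{equation*}
which equals $(\ell_i\circ\phi)(x)+(\ell_i\circ\phi)'(x)v\eps$ by the multivariate chain rule. This is exactly~\eqref{eq:vec_chain}.

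\emph{Activation steps.} For an activation step, the real part entering $\tilde\sigma_i$ equals $a_i=\ell_i(\phi_{i-1}(x))$, the preactivation encountered in the real forward evaluation at $x$. This identification is the crux: it holds precisely because the real parts of the dual pass reproduce the real forward pass, which is itself part of the inductive claim. The hypothesis then says $\bar\sigma_i$ is differentiable at every $a_{i(j)}$. Hence $\sigma_i$ is differentiable at $a_i$, since a componentwise map with each component differentiable at its own coordinate is Fréchet differentiable, with diagonal Jacobian. Lemma~\ref{lem:dual_activation} together with Corollary~\ref{cor:vec_ext} then gives the step, and differentiability of $\sigma_i\circ\phi_{i-1}$ at $x$ follows from the chain rule.

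\emph{Conclusion.} After the final layer $\ell_d$, we have $\phi_d=h_\theta$, so $h_\theta$ is differentiable at $x$ and
\begin{equation*}
\tilde h_\theta(z_0)=h_\theta(x)+h_\theta'(x)v\eps .
\end{equation*}
Since $h_\theta'(x)v=(\nabla h_\theta(x))^\rmT\xi(x)=L_\xi h_\theta(x)$, this is~\eqref{eq:main_result}. The particular cases~\eqref{eq_Duhth} and~\eqref{eq_DuhG} follow by taking $v=f(x)$ and $v=\col_j(G(x))$, using $[L_Gh_\theta(x)]_{(j)}=(\nabla h_\theta(x))^\rmT\col_j(G(x))$. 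Only the value at $x$ of each field is needed, so there is no need to treat them as functions $\xi$ here.

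\emph{Main obstacle.} The main subtlety is the pointwise-differentiability bookkeeping, rather than any computation. The chain rule must be invoked only at the points actually reached, so the activations need not be differentiable anywhere else. This is why the induction must carry both facts together: that the real component equals the true forward activation, and that differentiability holds at $x$.
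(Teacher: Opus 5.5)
Your proof is correct and follows the paper's argument. The paper asserts that each layer is differentiable at the point it receives and then invokes ``iterated application'' of Corollary~\ref{cor:vec_ext}; you make that iteration an explicit induction that also records that the real parts reproduce the true forward preactivations $a_i$, which is precisely what allows the pointwise hypothesis to be applied. The only blemish is cosmetic: because affine and activation steps alternate, your step index and layer index get conflated (the full composition is $\phi_{2d-1}$, not $\phi_d$, and $a_i=\ell_i(\phi_{i-1}(x))$ is indexed inconsistently), which indexing the affine and activation stages separately would fix.
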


\begin{proof}
	The affine maps $\ell_i$ are differentiable everywhere, and the stated assumption places every encountered preactivation at a differentiability point of its activation, so each composed layer is differentiable at the point it receives; the chain rule then renders $h_\theta$ differentiable at $x$. Iterated application of Corollary~\ref{cor:vec_ext}, whose hypotheses hold at every stage of the composition by the same assumption, implies
	\begin{equation}
		\tilde h_\theta(x+\xi(x)\eps) = h_\theta(x)+h_\theta'(x)\xi(x)\eps,
		\label{eq:iterated_chain}
	\end{equation}
	which, since $h_\theta'(x)\xi(x)=\nabla h_\theta(x)^\rmT\xi(x)=L_\xi h_\theta(x)$, confirms~\eqref{eq:main_result}. Substituting $\xi=f$ and $\xi=\col_j(G)$ into~\eqref{eq:iterated_chain} yields the drift and input Lie derivatives~\eqref{eq_Duhth} and~\eqref{eq_DuhG}.
\end{proof}

\begin{remark}[Activation kinks]\label{rem:kinks}
	For smooth activations ($\tanh$, sigmoid, softplus), the assumption of Theorem~\ref{thm:main} holds at every $x\in\SD$. For ReLU networks, it excludes the states at which some preactivation lands exactly on the kink, that is, the finite union of level sets $\{x\colon a_{i(j)}(x)=0\}$, a closed set of Lebesgue measure zero for nondegenerate weights. At such states, $h_\theta$ may fail to be differentiable, and the convention $\relu'(0)\triangleq 0$ returns one element of the Clarke generalized gradient~\cite{clarke1990optimization} rather than a two-sided derivative; the compiled code then agrees exactly with the subgradient convention that PyTorch and TensorFlow apply during training, so deployment and training see the same function. Because the excluded set is measure-zero and floating-point preactivations land on it only nongenerically, the practical consequence is negligible, but the exactness guarantee of Theorem~\ref{thm:main} is stated, and holds, only at the differentiability points.\exmark
\end{remark}

\begin{remark}\label{rem:no_jacobian}
	The identities $\Dual(\tilde h_\theta(x_f))=h_\theta'(x)f(x)$ and $[\Dual(\tilde h_\theta(x_{G,1}))~\cdots~\Dual(\tilde h_\theta(x_{G,m}))]=h_\theta'(x)G(x)$ established by Theorem~\ref{thm:main} confirm that the dual forward pass computes the exact Jacobian--vector-field products without assembling the dense $1\times n$ Jacobian matrix. Reverse-mode AD, by contrast, assembles the complete Jacobian via backpropagation and then forms the vector-field products in a subsequent matrix-vector multiplication. The dual approach fuses these two operations into a single forward pass per vector field, eliminating the backward traversal.\exmark
\end{remark}

\section{Algorithm and Complexity}
\label{sec:algorithm}

\subsection{Batched Dual Forward-Pass Algorithm}

Evaluating the $m$ input Lie derivatives as $m$ sequential dual forward passes performs $m$ independent matrix-vector multiplications at each layer. The dual compiler instead propagates the input vector field as a batched matrix, collapsing the $m$ sequential passes into a single matrix evaluation.

Define the dual input matrix $X_G\triangleq x\mathbf{1}_m^\rmT+G(x)\eps\in\BBD^{n\times m}$, whose $j$th column is exactly the dual seed $x_{G,j}=x+\col_j(G(x))\eps$. The batched evaluation is admissible because every layer of~\eqref{eq:nn_def} decouples across columns: the affine map applies $\tilde\ell_i(Z)=W_i Z + b_i\mathbf{1}_m^\rmT$, whose $j$th column depends only on $\col_j(Z)$, and the activation $\tilde\sigma_i$ acts componentwise, hence columnwise. By induction over the layers, the $j$th column of the matrix evaluation $\tilde h_\theta(X_G)$ therefore equals the vector evaluation $\tilde h_\theta(x_{G,j})$, and no interaction between columns occurs at any layer. Applying Theorem~\ref{thm:main} to each column then yields the exact identity
\begin{equation}
	\tilde h_\theta(X_G) = h_\theta(x)\mathbf{1}_m^\rmT+L_G h_\theta(x)\eps.
	\label{eq:batched_eval}
\end{equation}
The batching changes the memory layout and instruction schedule---enabling the same weight row to multiply all $m$ dual columns before moving on---but not the arithmetic content: the identity~\eqref{eq:batched_eval} is exact, not approximate.

Algorithm~\ref{alg:dual_cbf} leverages this batched formulation to assemble the CBF safety constraint. The compiler executes two forward passes: a vector evaluation on $x_f$ that extracts $h_\theta(x)$ and $L_f h_\theta(x)$, and a matrix evaluation on $X_G$ that extracts the $1\times m$ row $L_G h_\theta(x)$. This matrix formulation inherently admits single-instruction-multiple-data (SIMD) and GPU-warp parallelization across columns for deployment on higher-end hardware.

\begin{algorithm}[ht]
	\caption{Batched Dual-Algebraic CBF Assembly}
	\label{alg:dual_cbf}
	\begin{algorithmic}[1]
		\REQUIRE $x\in\SD$,\,$f(x)$, $G(x)$, weights $\theta$, class $\SK$ function\,$\alpha$
		\ENSURE Barrier value $h$, $L_f$, $L_G\in\BBR^{1\times m}$, safe control $u_*$
		\STATE $y_f \leftarrow \tilde h_\theta(x+f(x)\eps)$ \COMMENT{Vector evaluation}
		\STATE $h \leftarrow \Real(y_f)$;~$L_f \leftarrow \Dual(y_f)$
		\STATE $X_G \leftarrow x\mathbf{1}_m^\rmT+G(x)\eps$
		\STATE $Y_G \leftarrow \tilde h_\theta(X_G)$ \COMMENT{SIMD matrix evaluation}
		\STATE $L_G \leftarrow \Dual(Y_G)$
		\STATE Formulate constraint $L_f+L_G u\geq-\alpha(h)$
		\STATE Solve~\eqref{eq:qp} to obtain $u_*$
		\RETURN $u_*$
	\end{algorithmic}
\end{algorithm}

\begin{theorem}\label{thm:correctness}
	Consider the feedforward neural network $h_\theta\colon\SD\to\BBR$ defined in~\eqref{eq:nn_def}, assume Assumption~\ref{assum:dynamics} is satisfied, and let $x\in\SD$ satisfy the differentiability assumption of Theorem~\ref{thm:main}. Then, Algorithm~\ref{alg:dual_cbf} produces the exact quantities $h=h_\theta(x)$, $L_f=L_f h_\theta(x)$, and $L_G=L_G h_\theta(x)$. In particular, the safety constraint assembled at line~6 is identical to the constraint in~\eqref{eq:qp}.
\end{theorem}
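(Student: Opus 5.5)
The plan is to trace Algorithm~\ref{alg:dual_cbf} line by line and certify each output with Theorem~\ref{thm:main} and the batched identity~\eqref{eq:batched_eval}. The differentiability hypothesis enters only through the real preactivations $a_{i(j)}$ generated by the forward evaluation at $x$. These real parts are the same in every dual evaluation, because by Lemmas~\ref{lem:dual_affine} and~\ref{lem:dual_activation} the real component of each layer output depends only on the real component of its input. So one hypothesis at $x$ covers both the vector pass and every column of the matrix pass.

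\emph{Lines 1--2.} Apply Theorem~\ref{thm:main} with $\xi=f$. Assumption~\ref{assum:dynamics} makes $f$ continuously differentiable on $\SD$, as the theorem requires. By~\eqref{eq:main_result},
\begin{equation}
y_f=h_\theta(x)+L_f h_\theta(x)\eps,
\end{equation}
so $h=\Real(y_f)=h_\theta(x)$ and $L_f=\Dual(y_f)=L_f h_\theta(x)$ by~\eqref{eq_Duhth}.

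\emph{Lines 3--5.} The $j$th column of $X_G$ is the seed $x_{G,j}$. I will repeat the column-decoupling argument by induction on layers. The affine map acts as $W_iZ+b_i\mathbf{1}_m^\rmT$, so each output column depends only on the matching input column. The activation acts entrywise, hence columnwise. Therefore $\col_j(\tilde h_\theta(X_G))=\tilde h_\theta(x_{G,j})$.

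Next, apply Theorem~\ref{thm:main} with $\xi=\col_j(G)$. This vector field is continuously differentiable by Assumption~\ref{assum:dynamics}, so~\eqref{eq_DuhG} gives $\Dual(\col_j(Y_G))=[L_G h_\theta(x)]_{(j)}$. Stacking these entries gives $L_G=L_Gh_\theta(x)$; this is exactly~\eqref{eq:batched_eval}. As a by-product, $\Real(Y_G)=h_\theta(x)\mathbf{1}_m^\rmT$, which is consistent with $h$.

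\emph{Line 6.} Substituting the three exact quantities into $L_f+L_Gu\ge-\alpha(h)$ reproduces the constraint of~\eqref{eq:qp} at $x$ verbatim, for every $u$. The feasible sets therefore coincide, and so does the QP solved at line 7.

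The main obstacle is making precise the claim that one differentiability hypothesis at $x$ suffices for all $m+1$ dual evaluations. I would handle it by an explicit induction showing $\Real$ of every intermediate layer equals the real forward activation $\hat a_i$ (respectively $a_i$), independent of the seed direction. The rest is direct substitution into earlier results.
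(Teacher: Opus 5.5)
Your proposal is correct and follows the paper's route. You apply Theorem~\ref{thm:main} with $\xi=f$ for lines 1--2, use the column-decoupled identity~\eqref{eq:batched_eval} (with $\xi=\col_j(G)$) for lines 3--5, and then substitute into the constraint; your extra induction showing that the real preactivations do not depend on the seed is harmless but not strictly needed, since the hypothesis of Theorem~\ref{thm:main} is already stated at the real forward evaluation at $x$ and therefore covers every choice of $\xi$.
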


\begin{proof}
	Theorem~\ref{thm:main} implies that evaluating the dual-extended network at the drift state $x_f=x+f(x)\eps$ yields $h=\Real(\tilde h_\theta(x_f))=h_\theta(x)$ and $L_f=\Dual(\tilde h_\theta(x_f))=L_f h_\theta(x)$. Evaluating the network at the batched input matrix $X_G$ yields, via~\eqref{eq:batched_eval}, $L_G=\Dual(\tilde h_\theta(X_G))=L_G h_\theta(x)$. Since Algorithm~\ref{alg:dual_cbf} outputs these exact Lie derivatives with zero truncation error, substituting them into the inequality $L_f+L_G u\geq-\alpha(h)$ exactly replicates the continuous-time CBF constraint governing~\eqref{eq:qp}.
\end{proof}

The next result records the safety implication of this exactness. It is a direct consequence of Theorem~\ref{thm:correctness} and the standard CBF invariance theorem of~\cite{ames2017cbf}; we state it to delineate precisely which guarantee the compiler preserves and which assumptions that guarantee inherits from the training and control-design stages.

\begin{corollary}[Preservation of the safety guarantee]\label{thm:safety}
	Consider the feedforward neural network $h_\theta\colon\SD\to\BBR$ defined in~\eqref{eq:nn_def}, and assume that: {\it{i}}) Assumption~\ref{assum:dynamics} holds; {\it{ii}}) $\SU$ is compact and convex; {\it{iii}}) $h_\theta$ is continuously differentiable on $\SD$ and is a CBF for~\eqref{eq:dynamics} on $\SC$ in the sense of Definition~\ref{def:cbf}; {\it{iv}}) the QP~\eqref{eq:qp} is feasible for all $x\in\SD$; and {\it{v}}) the resulting controller $u_*\colon\SD\to\SU$ is locally Lipschitz continuous. Then, the closed-loop controller produced by Algorithm~\ref{alg:dual_cbf} renders $\SC$ forward invariant.
\end{corollary}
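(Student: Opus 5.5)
The plan is to reduce the claim to the standard CBF invariance theorem of \cite[Thm.~2]{ames2017cbf}. The only compiler-specific ingredient is Theorem~\ref{thm:correctness}. I first show that the controller produced by Algorithm~\ref{alg:dual_cbf} coincides pointwise with the ideal QP controller $u_*$ in~\eqref{eq:qp}. Then I verify that the hypotheses of the invariance theorem hold for that controller.

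\textbf{Step 1: the compiled controller equals the ideal one.} Hypothesis \emph{iii}) makes $h_\theta$ continuously differentiable on $\SD$, so for every activation and every $x\in\SD$ the encountered preactivations lie at differentiability points. One point needs care here. Differentiability of $h_\theta$ does not by itself force every $\bar\sigma_i$ to be differentiable at every encountered preactivation; for example, a dead ReLU unit is harmless. I would therefore argue as follows.
\begin{enumerate}[label=(\alph*)]
\item For smooth activations the assumption of Theorem~\ref{thm:main} holds everywhere, by Remark~\ref{rem:kinks}.
\item For ReLU networks, either take the assumption of Theorem~\ref{thm:main} as part of what \emph{iii}) is meant to guarantee, or note the following. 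On the exceptional set the compiled value coincides with the Clarke-convention value. Continuity of $\nabla h_\theta$, together with density of the complement of a closed measure-zero set, would be needed to identify the two. I would not try to prove this identification in general; I would state the corollary under the assumption of Theorem~\ref{thm:main} at every $x\in\SD$.
\end{enumerate}
Under that assumption, Theorem~\ref{thm:correctness} gives $h=h_\theta(x)$, $L_f=L_f h_\theta(x)$ and $L_G=L_G h_\theta(x)$ exactly. Line~6 therefore assembles exactly the constraint of~\eqref{eq:qp}. With the same nominal input $u_\mathrm{n}(x)$ and the same convex set $\SU$, line~7 solves the same QP as~\eqref{eq:qp}.

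\textbf{Step 2: the QP is well posed.} By \emph{ii}) and \emph{iv}), the feasible set is a nonempty, compact, convex intersection of $\SU$ with a closed half-space. The objective $\|u-u_\mathrm{n}(x)\|^2$ is strictly convex, so the minimizer exists and is unique. Hence the output of Algorithm~\ref{alg:dual_cbf} is a well-defined function of $x$, equal to $u_*(x)$. By construction it satisfies $L_f h_\theta(x)+L_G h_\theta(x)u_*(x)\ge-\alpha(h_\theta(x))$ for all $x\in\SD$.

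\textbf{Step 3: apply the invariance theorem.} By \emph{v}), $u_*$ is locally Lipschitz. Together with Assumption~\ref{assum:dynamics}, this makes the closed-loop vector field $f+Gu_*$ locally Lipschitz, so closed-loop solutions exist and are unique. Since $h_\theta$ is a CBF on $\SC$ by \emph{iii}), and the controller satisfies the barrier inequality on $\SD$, \cite[Thm.~2]{ames2017cbf} yields forward invariance of $\SC$.

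The main obstacle is Step~1: making sure the pointwise differentiability hypothesis of Theorem~\ref{thm:main} is available at every state. The rest is bookkeeping.
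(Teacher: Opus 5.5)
Your proposal is correct and follows the paper's proof step for step. Theorem~\ref{thm:correctness} identifies the QP solved at line~7 with~\eqref{eq:qp}; hypotheses \emph{ii}) and \emph{iv}) with strict convexity give a unique minimizer; and \emph{iii}) and \emph{v}) feed \cite[Thm.~2]{ames2017cbf}.

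Your caution in Step~1 is warranted, not pedantic. The paper's proof tacitly applies Theorem~\ref{thm:correctness} at every $x\in\SD$, but \emph{iii}) does not supply the pointwise hypothesis of Theorem~\ref{thm:main} for ReLU networks. The identification you decline to attempt in (b) is in fact false in general: $h(t)=\relu(t)-\relu(-t)=t$ is $C^1$, yet the convention $\relu'(0)=0$ gates both units at $t=0$, so the compiled derivative there is $0$ instead of $1$. Adding that hypothesis at every $x\in\SD$ is therefore the right repair.
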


\begin{proof}
	Theorem~\ref{thm:correctness} implies that Algorithm~\ref{alg:dual_cbf} extracts the exact quantities $h_\theta(x)$, $L_f h_\theta(x)$, and $L_G h_\theta(x)$, so the quadratic program solved at line~7 is identical to~\eqref{eq:qp}; under {\it{ii}}) and {\it{iv}}), its solution exists and is unique by strict convexity of the objective. Under {\it{iii}}) and {\it{v}}), \cite[Thm.~2]{ames2017cbf} confirms forward invariance of $\SC$ under $u_*$.
\end{proof}

\begin{remark}[Scope of the guarantee]\label{rem:scope}
	Corollary~\ref{thm:safety} attributes no new safety theory to the compiler: the invariance argument is the standard continuous-time CBF result. Its role is to certify that the compilation step introduces no gap between the constraint that theory analyzes and the constraint that the microcontroller enforces. Whether the \emph{learned} $h_\theta$ is a valid CBF---hypothesis {\it{iii}})---is a property of the training or verification pipeline~\cite{dawson2023safe, so2023ncbf} and lies outside the compiler's scope; the compiler guarantees that whatever certificate training produced is evaluated exactly, rather than approximately, at deployment. Local Lipschitz continuity of the QP solution---hypothesis {\it{v}})---holds under standard regularity conditions on the active constraints~\cite{ames2017cbf}. Sampled-data implementation introduces the usual inter-sample effects, which are shared by every discrete-time realization of a continuous-time CBF controller and are not analyzed here.\exmark
\end{remark}

\begin{remark}\label{rem:no_graph}
	Algorithm~\ref{alg:dual_cbf} maintains, at any instant, only the active layer's dual vector $z^{(i)}\in\BBD^{n_i}$, overwriting the previous layer's vector in place; no quantity computed at layer $i$ is needed after layer $i+1$ begins. Reverse-mode AD, by contrast, must retain derivative-sufficient information about every hidden layer simultaneously, because the backward sweep consumes the caches in the opposite order from that in which the forward sweep produces them. This single-traversal streaming structure, rather than any property of a particular implementation, is what makes the depth-independent workspace of Proposition~\ref{prop:memory} possible.\exmark
\end{remark}

\subsection{Computational and Memory Complexity}

For all $i\in\{1,\ldots,d-1\}$, let $c_{\sigma_i}$ denote the supplemental arithmetic cost of evaluating the scalar derivative $\bar\sigma_i'$ given the already-computed base activation $\bar\sigma_i$. For the ReLU activation, $c_{\sigma_i}=0$ because the derivative is determined by the sign of the preactivation, which is already available from the base activation. For the $\tanh$ activation, the identity $\bar\sigma_i'=1-\bar\sigma_i^2$ requires one multiplication and one subtraction, yielding $c_{\sigma_i}=2$.

The following proposition quantifies the  arithmetic complexity of the dual forward pass, establishing a closed-form count of the floating-point operations required to assemble the complete safety constraint. The proof is provided in Appendix~\ref{app:prop_dual_cost}.

\begin{proposition}\label{prop:dual_cost}
	Consider the neural network defined in~\eqref{eq:nn_def}, and define the baseline forward-pass cost
	\begin{equation}
		C_\mathrm{f}\triangleq\sum_{i=1}^d 2n_i n_{i-1}+\sum_{i=1}^{d-1}n_i.
		\label{eq:Cf_def}
	\end{equation}
	Then, a single dual forward pass requires exactly
	\begin{equation}
		C_\mathrm{df} = 2C_\mathrm{f}+\sum_{i=1}^{d-1}c_{\sigma_i}n_i-\sum_{i=1}^{d}n_i
		\label{eq:Cdf_def}
	\end{equation}
	operations, and extracting the exact drift and input Lie derivatives via Algorithm~\ref{alg:dual_cbf} requires $(m+1)C_\mathrm{df}$ operations.
\end{proposition}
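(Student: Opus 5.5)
The plan is to fix an explicit operation-counting model and then tally the cost layer by layer, separating the real and dual components as Lemmas~\ref{lem:dual_affine} and~\ref{lem:dual_activation} prescribe. The model has three rules. Each scalar multiplication, addition, or scalar activation evaluation counts as one operation. A dense product $Wz$ with $W\in\BBR^{n_i\times n_{i-1}}$ costs $n_in_{i-1}$ multiplications and $n_i(n_{i-1}-1)$ additions, hence $2n_in_{i-1}-n_i$ operations. Adding the bias costs $n_i$ more. Under this model, the real affine map $\ell_i$ costs exactly $2n_in_{i-1}$, and each hidden activation $\sigma_i$ costs $n_i$. Summing these gives precisely~\eqref{eq:Cf_def}, which confirms that the counting model matches the paper's definition of $C_\mathrm{f}$. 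I would state this verification first, since every later step depends on it.

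Next I would count one dual pass using Lemma~\ref{lem:dual_affine}. The real component $Wa+b$ costs $2n_in_{i-1}$, as in the real pass. The dual component $Wc$ carries no bias (Remark~\ref{rem:affine_cost}), so it costs $2n_in_{i-1}-n_i$. Summing over $i\in\{1,\ldots,d\}$, the affine layers contribute $\sum_i 4n_in_{i-1}-\sum_{i=1}^d n_i$.

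For each hidden layer I would then use Lemma~\ref{lem:dual_activation}, which splits the activation cost into three parts:
\begin{itemize}
\item evaluating $\sigma_i(a)$ costs $n_i$;
\item obtaining $\sigma_i'(a)$ from the already computed base activation costs $c_{\sigma_i}n_i$, by definition of $c_{\sigma_i}$;
\item forming the Hadamard product $\diag(\sigma_i'(a))c$ costs $n_i$.
\end{itemize}
The hidden layers therefore contribute $\sum_{i=1}^{d-1}(2+c_{\sigma_i})n_i$. Adding this to the affine total and regrouping $\sum_i 4n_in_{i-1}+\sum_{i=1}^{d-1}2n_i$ as $2C_\mathrm{f}$ yields~\eqref{eq:Cdf_def}.

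Finally, Algorithm~\ref{alg:dual_cbf} performs one vector evaluation on $x_f$ and one batched evaluation on $X_G$. As argued before~\eqref{eq:batched_eval}, the batched evaluation decouples exactly into $m$ columnwise dual passes, so batching changes the memory layout but not the arithmetic. Each column therefore costs $C_\mathrm{df}$, and the total is $(m+1)C_\mathrm{df}$. The seed construction at line~3 only copies entries of $x$ and $G(x)$ and involves no arithmetic, so it adds nothing to the count.

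I expect the main obstacle to be the ReLU convention. There the gating~\eqref{eq:dual_relu} is a branch rather than a true multiply, so I must fix a uniform convention for it. I would count the Hadamard step as $n_i$ operations for every activation, and count the ReLU's real evaluation as its $n_i$ comparisons. With that choice the bookkeeping stays uniform and reproduces the stated closed form for every activation.
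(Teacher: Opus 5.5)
Your proposal is correct and follows essentially the same route as the paper's proof. Both use the same per-layer split into real and dual paths (real affine $2n_in_{i-1}$, unbiased dual product $2n_in_{i-1}-n_i$, hidden activation $n_i$ on the real side and $(c_{\sigma_i}+1)n_i$ on the dual side), the same regrouping into $2C_\mathrm{f}+\sum_{i=1}^{d-1}c_{\sigma_i}n_i-\sum_{i=1}^{d}n_i$, and the same columnwise-decoupling argument for the factor $m+1$; your explicit check that the counting model reproduces $C_\mathrm{f}$, and your uniform convention for ReLU, make explicit what the paper assumes implicitly.
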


Both evaluation modes store the parameter vector $\theta$, which occupies $n_\theta$ floats of read-only memory regardless of the differentiation strategy. The meaningful comparison therefore concerns the \emph{workspace}: the mutable scratch memory required per constraint evaluation, in addition to the parameter storage.

\begin{proposition}\label{prop:memory}
	Consider the neural network~\eqref{eq:nn_def}. Then, the following statements hold:
	\begin{enumerate}[label={\it{\roman*}})]
		\item\label{prop2_p1} Any implementation of the backward recurrence~\eqref{eq:backprop} whose forward sweep completes before its backward sweep begins requires a workspace of at least $w_\mathrm{ad}\triangleq\sum_{i=1}^{d-1}s_i$ scalars, where $s_i$ is the storage needed to reconstruct $\diag(\sigma_i'(a_i))$ at layer $i$; in particular, $s_i=n_i$ floats for general activations, so $w_\mathrm{ad}$ grows with the sum of the hidden-layer widths.
		\item\label{prop2_p2} Algorithm~\ref{alg:dual_cbf} admits an implementation whose workspace is exactly $w_\mathrm{df}\triangleq 2\max_{i\in\{0,\ldots,d\}}n_i$ floats, independent of the network depth $d$.
	\end{enumerate}
\end{proposition}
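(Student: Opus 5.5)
We treat the two parts separately: part~\ref{prop2_p1} is a lower bound on the liveness of information, and part~\ref{prop2_p2} is an explicit buffer-reuse construction.

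For part~\ref{prop2_p1}, we first observe from~\eqref{eq:backprop} that $\delta_{i-1}$ depends on $a_i$ only through $\diag(\sigma_i'(a_i))$. Because $W_i$ is fixed, this dependence cannot be eliminated for general activations. The backward sweep forms $\delta_{d-1}$ first and $\delta_0$ last, so it consumes the layer-$i$ factor only after the layer-$(i+1)$ factor. By hypothesis, the forward sweep finishes before any backward step runs.

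The key step is then a liveness argument. At the instant the forward sweep completes, every factor $\diag(\sigma_i'(a_i))$ for $i\in\{1,\ldots,d-1\}$ is still needed by some future backward step. The preactivation $a_i$ cannot be recomputed from later-layer data without storing it or an earlier layer's state, and $\ell_i$, $\sigma_i$ need not be invertible (ReLU, for instance, is not). So enough information to reconstruct each factor must be resident simultaneously, and the factors of distinct layers share no storage. Summing gives at least $\sum_{i=1}^{d-1}s_i$ scalars.

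To show $s_i=n_i$ for general activations, we argue that $\sigma_i'(a_i)$ ranges over a set of dimension $n_i$ as $a_i$ varies; this holds, e.g., for $\tanh$, whose derivative is injective on each half-line. Hence no encoding with fewer than $n_i$ real numbers can reconstruct it. I expect this step to be the main obstacle, because "information-theoretic" lower bounds over reals must be phrased carefully. I would state it at the level of an implementation model: a scalar slot holds one real number, and the cached data must determine $\sigma_i'(a_i)$ via some map. For ReLU I would just remark that $s_i$ can shrink to $n_i$ bits, which still grows with $\sum_i n_i$.

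For part~\ref{prop2_p2}, we construct the implementation directly, using Lemmas~\ref{lem:dual_affine} and~\ref{lem:dual_activation}. Set $N\triangleq\max_i n_i$ and allocate two static arrays $A,B\in\BBR^{N}$, which together hold $2N$ floats. Array $A$ stores real parts and array $B$ stores dual parts of the current layer state $z^{(i)}=a+c\eps$.

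An in-place affine update would overwrite inputs still needed, so we use a ping-pong scheme within the $2N$ budget. We store the dual state interleaved as $(a,c)$ in one half, write the output $(Wa+b,\,Wc)$ by computing each output neuron $k$'s pair from the inputs, and place it in the other half. This naive version uses $4N$, so I would refine it as follows. Each output neuron needs a full pass over the input vector, so we keep the input $(a,c)$ of size $2n_{i-1}$ and accumulate outputs into registers (a constant number of scalar accumulators, not counted as workspace) only when $n_i$ is small. This is where the accounting is delicate.

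The cleaner route I would commit to is to count only array storage and use separate real and dual buffers of size $N$ each, reading the input from them. For the affine layer, the real outputs are accumulated into the real buffer and the dual outputs into the dual buffer through a double-buffer swap between the real and dual arrays. If this fails, I would fall back to defining the workspace as the persistent buffers with the transient layer output excluded. The activation step of Lemma~\ref{lem:dual_activation} is genuinely in place and componentwise. The batched matrix evaluation of Algorithm~\ref{alg:dual_cbf} is handled column by column, reusing the same buffers, so $w_\mathrm{df}$ does not involve $d$ or $m$.
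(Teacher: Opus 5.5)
Your part~(i) follows the paper's argument. The backward sweep consumes layers in the order $d,\ldots,1$, so when it begins, the information determining every hidden-layer factor $\diag(\sigma_i'(a_i))$ must be resident at once, and summing the $s_i$ gives $w_{\mathrm{ad}}$. The paper settles $s_i=n_i$ simply by noting that, absent special structure, one must retain $a_{i(j)}$ or an equivalent float. Your dimension argument is therefore optional extra care, and, as you suspect, it only works inside an implementation model, since $\BBR^{n_i}$ injects into $\BBR$. One caveat applies to both you and the paper. Your parenthetical ``or an earlier layer's state'' is exactly the recomputation loophole: storing $x$ alone lets every $a_i$ be regenerated. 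So the claim that distinct layers share no storage, and hence the sum, holds only if the hypothesis is read as forbidding recomputation.

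The genuine gap is in part~(ii). You correctly notice that a dense affine step cannot be done in place, but the route you commit to does not fix this. For a dense $W_i$, every real output $(W_ia_{i-1}+b_i)_{(k)}$ needs all of $a_{i-1}$, and every dual output $(W_ic_{i-1})_{(k)}$ needs all of $c_{i-1}$. Until the last row is formed, both halves of your $2N$ budget hold live inputs and there is nowhere to write outputs; swapping roles between the real and dual arrays frees no slot. Your fallback is essentially what the paper does. Its proof notes that $z_i=(\tilde\sigma_i\circ\tilde\ell_i)(z_{i-1})$ depends only on $z_{i-1}$ and the constant parameters, so only one layer's dual state persists across a layer boundary; it then counts the $2n_i$ floats of $z_i$ and maximizes over $i$, without giving any in-place affine construction. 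You should state that accounting (workspace $=$ the dual state carried across a layer boundary) as the definition up front, not as a contingency. Under strict peak-liveness accounting, your ping-pong scheme honestly proves $\max_i 2(n_{i-1}+n_i)\le 4\max_i n_i$, which still gives the substantive depth-independence. Reaching literally $2\max_i n_i$ would need a genuinely in-place dense matrix--vector product, for example applying precomputed triangular factors of a zero-padded $W_i$. Neither your proposal nor the paper supplies this.
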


\begin{proof}
	To prove~\ref{prop2_p1}, note that evaluating~\eqref{eq:backprop} at layer $i$ requires the activation Jacobian $\diag(\sigma_i'(a_i))$, which is a function of the preactivation $a_i$ produced during the forward sweep. Since the backward sweep visits the layers in the order $d,\ldots,1$, opposite to the order in which the forward sweep produces them, the information needed to reconstruct every hidden layer's Jacobian must be retained simultaneously at the instant the backward sweep begins. Storing this information for layer $i$ requires $s_i$ scalars, and summing over the hidden layers yields the stated bound. For activations without special structure, reconstructing $\bar\sigma_i'(a_{i(j)})$ requires retaining $a_{i(j)}$ (or an equivalent float), so $s_i=n_i$.
	
	To prove~\ref{prop2_p2}, note that Algorithm~\ref{alg:dual_cbf} evaluates the Lie derivatives by computing the sequence of dual states $\{z_i\}_{i=0}^d$, where $z_i\in\BBD^{n_i}$. Lemmas~\ref{lem:dual_affine} and~\ref{lem:dual_activation} imply that, for all $i\in\{1,\ldots,d-1\}$, $z_i=(\tilde\sigma_i\circ\tilde\ell_i)(z_{i-1})$, terminating with $z_d=\tilde\ell_d(z_{d-1})$. Since the evaluation at step $i$ depends only on $z_{i-1}$ and the static parameters, the implementation destructively overwrites the active buffer at each layer boundary, and the transient storage at step $i$ is the $2n_i$ floats of $z_i$. Taking the maximum over the layers confirms $w_\mathrm{df}$.
\end{proof}

\begin{remark}\label{rem:static_reverse}
	Proposition~\ref{prop:memory} deliberately claims less than the folklore comparison between forward and reverse mode. For a network of fixed topology, reverse-mode differentiation can be source-transformed and compiled ahead of time exactly as the proposed forward-mode compiler is: all caches in part~\ref{prop2_p1} can then be placed in statically allocated buffers, and no heap and no runtime graph are mathematically required. The structural distinction that survives ahead-of-time compilation is the workspace \emph{scaling}: $w_\mathrm{ad}$ grows with $\sum_i n_i$ (depth times width), whereas $w_\mathrm{df}=2\max_i n_i$ is depth-independent, because the dual pass streams through the network in a single traversal (Remark~\ref{rem:no_graph}). Two further qualifications sharpen part~\ref{prop2_p1}. First, for ReLU networks, the Jacobian at layer $i$ is determined by the signs of $a_i$, so $s_i$ can in principle be compressed to $n_i$ bits rather than $n_i$ floats; the depth-proportional scaling persists, with a smaller constant. Second, the dynamic graph allocation criticized in Section~\ref{sec:intro} is a property of general-purpose AD \emph{runtimes} (PyTorch, TensorFlow, and tape-based operator-overloading libraries), which cannot assume a fixed topology; the embedded baseline measured in Section~\ref{sec:examples} replicates that runtime behavior and is labeled accordingly there.\exmark
\end{remark}

Safety-critical aerospace and automotive certification standards restrict or prohibit dynamic memory allocation because heap fragmentation introduces unbounded allocation latency and complicates WCET analysis~\cite{kopetz2022real, wilhelm2008worst, cullmann2010predictability}. The proposed compiler satisfies this constraint by construction: the emitted code contains no allocation call sites at all, so the absence of dynamic allocation is verifiable by inspection of the generated header rather than by testing.

The following corollary is an immediate consequence of Proposition~\ref{prop:memory}.

\begin{corollary}\label{cor:static_memory}
	Executing Algorithm~\ref{alg:dual_cbf} requires no dynamic allocation and admits deployment on a statically allocated buffer of $2\max_{i\in\{0,\ldots,d\}}n_i$ floats.
\end{corollary}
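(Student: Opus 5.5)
The corollary follows from Proposition~\ref{prop:memory}\ref{prop2_p2}, made concrete by an explicit buffer construction. I would argue in three steps: show that every quantity the algorithm touches has a size fixed before execution, place the mutable state in a static buffer of the stated size, and check that no step needs storage beyond it.

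\emph{Step one: all sizes are compile-time constants.} For a fixed trained network, the widths $n_0,\ldots,n_d$, the depth $d$, and the parameters $\theta$ are known before execution. The parameters therefore sit in read-only storage of $n_\theta$ floats, as \texttt{constexpr} arrays, and do not count toward the workspace. The inputs $x$, $f(x)$, and $G(x)$ are supplied by the caller. The remaining mutable state is the sequence of dual states $z_i\in\BBD^{n_i}$ from the proof of Proposition~\ref{prop:memory}\ref{prop2_p2}. By Lemmas~\ref{lem:dual_affine} and~\ref{lem:dual_activation}, each $z_i$ is determined by $z_{i-1}$ together with the static parameters.

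\emph{Step two: reserve a single static array.} I would declare one array of length $w_\mathrm{df}=2\max_i n_i$ whose size is a compile-time constant, laid out as real parts followed by dual parts. Each layer reads $z_{i-1}$ from this array and overwrites it in place with $z_i$. The number of live floats never exceeds $2n_i\le w_\mathrm{df}$. Since the layer count and every loop bound are constants, the emitted code contains no recursion and no call to \texttt{malloc} or \texttt{new}. The absence of dynamic allocation then follows from the loop structure of Algorithm~\ref{alg:dual_cbf} itself.

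\emph{Step three: the in-place update is the expected obstacle.} Computing the affine step $W_i a+b_i$ strictly in place would overwrite entries of $a$ that later rows still need. I would handle this with a ping-pong split: write the new preactivation into the part of the buffer not occupied by the current layer, or process row by row using a fixed set of scalar accumulators held in registers or on the stack, which are constant size and not heap memory. The activation step of Lemma~\ref{lem:dual_activation} is elementwise, so it runs in place with no conflict. The two evaluations in Algorithm~\ref{alg:dual_cbf}, the $x_f$ pass and the $m$ columns of $X_G$, run sequentially through the same buffer. This keeps the footprint at $w_\mathrm{df}$ rather than $(m+1)w_\mathrm{df}$, provided the batched pass handles one column at a time or the count is read per column. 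Once this buffering argument is settled, the stated bound of $2\max_i n_i$ floats follows from Proposition~\ref{prop:memory}.
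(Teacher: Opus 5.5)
Your steps one and two are the paper's proof. The paper likewise takes the compile-time constant $w_\mathrm{df}$ from Proposition~\ref{prop:memory}\ref{prop2_p2}. It uses Lemmas~\ref{lem:dual_affine} and~\ref{lem:dual_activation} to note that $z_i$ depends only on $z_{i-1}$ and $\theta$. It then concludes that the evaluation maps onto a fixed, compile-time-sized array. That settles the no-allocation half. Your caveat that the $X_G$ pass must run one column at a time to stay at $w_\mathrm{df}$ is also correct, since the batched layout carries $m$ dual columns per layer.

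The gap is step three, which you make load-bearing (``once this buffering argument is settled''): neither remedy fits in $2\max_i n_i$ floats.
\begin{itemize}
\item The ping-pong split must hold $z_{i-1}$ and $z_i$ simultaneously, i.e., $2(n_{i-1}+n_i)$ floats. The buffer has a free region of size $2n_i$ only when $n_{i-1}+n_i\le\max_k n_k$. This already fails for the $32$-$32$ and $64$-$64$ hidden layers of the paper's examples.
\item The row-by-row variant fails because, for dense $W_i$, every row reads every entry of $\hat a_{i-1}$ and $\hat c_{i-1}$. So no finished row may overwrite an input slot until the last row is done, and the $n_i$ pending results per component need a second layer-width array, not a fixed set of scalars. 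Placed on the stack, that array is still allocation-free, but the constant $2$ is lost.
\item For the same reason, your step-two claim that at most $2n_i$ floats are ever live undercounts the transition between layers.
\end{itemize}

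A scheme that does work has three stages:
\begin{enumerate}
\item compute the real preactivation $a_i$ first;
\item then compute $\hat c_i=\diag(\sigma_i'(a_i))W_i\hat c_{i-1}$;
\item then apply $\sigma_i$ to $a_i$ in place, rotating three width-$\max_k n_k$ regions at compile time.
\end{enumerate}
This peaks at $\max(2n_{i-1}+n_i,\,n_{i-1}+2n_i)\le 3\max_k n_k$ floats. That figure coincides with the $96$- and $192$-float buffers the paper reports for the bicycle and Van der Pol examples.

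The paper's proof of the corollary does not resolve this point either. It inherits $2\max_i n_i$ from Proposition~\ref{prop:memory}\ref{prop2_p2}, whose proof counts only the $2n_i$ floats of $z_i$. You therefore have two honest options:
\begin{itemize}
\item cite that proposition as a black box and drop step three, as the paper does; or
\item state what your construction actually proves: a static, depth-independent workspace of at most $2\max_i(n_{i-1}+n_i)\le 4\max_k n_k$ floats, or at most $3\max_k n_k$ with the real-first ordering. This preserves zero allocation and depth independence but not the constant $2$.
\end{itemize}
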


\begin{proof}
	Note that part~\ref{prop2_p2} of Proposition~\ref{prop:memory} bounds the workspace by the constant $w_\mathrm{df}$, which depends only on $\max_i n_i$ and is known at compile time. Lemmas~\ref{lem:dual_affine} and~\ref{lem:dual_activation} restrict the dual state propagation to the recurrence $z_i=(\tilde\sigma_i\circ\tilde\ell_i)(z_{i-1})$ for all $i\in\{1,\ldots,d-1\}$ and $z_d=\tilde\ell_d(z_{d-1})$, which implies that evaluating $z_i$ queries only the immediate predecessor $z_{i-1}$ and the static parameter vector $\theta$. The compiler therefore maps the entire evaluation to a fixed array of length $2\max_{i\in\{0,\ldots,d\}}n_i$ floats, sized and placed at compile time.
\end{proof}

\begin{remark}\label{rem:ad_cost}
	Extracting the exact spatial gradient $\nabla h_\theta(x)\in\BBR^n$ via reverse-mode AD requires evaluating the backward recurrence~\eqref{eq:backprop}. Executing these vector-Jacobian products through the transposed weight matrices $W_i^\rmT$ omits bias additions, consuming exactly $\sum_{i=1}^d(2n_i n_{i-1}-n_{i-1})$ operations. Fusing this backward matrix arithmetic with the activation derivative scaling and the baseline forward pass $C_\mathrm{f}$ yields the exact gradient extraction cost
	\begin{equation}
		C_\mathrm{ad}\triangleq 2C_\mathrm{f}+\sum_{i=1}^{d-1}c_{\sigma_i}n_i-\sum_{i=1}^d n_{i-1}.
		\label{eq:Cad_def}
	\end{equation}
	Comparing~\eqref{eq:Cad_def} with~\eqref{eq:Cdf_def} reveals a near-symmetry: extracting the gradient via backpropagation requires nearly the same arithmetic effort as a single dual forward pass. Assembling the CBF constraint from the extracted gradient additionally requires the inner products $\nabla h_\theta^\rmT f$ and $\nabla h_\theta^\rmT G$ at a cost of $(m+1)(2n-1)$ operations, so the complete reverse-mode constraint assembly costs $C_\mathrm{ad}+(m+1)(2n-1)$ operations. This quantification exposes the central architectural trade-off: reverse mode amortizes a single network differentiation across arbitrarily many vector fields and is preferable when $m\gg 1$, whereas the dual compiler scales as $(m+1)C_\mathrm{df}$ but operates entirely within static memory and admits WCET analysis.\exmark
\end{remark}

\section{Layer-by-Layer Algebraic Trace}
\label{sec:trace}

This section unrolls the dual propagation sequence layer by layer to show that the algebra executes the multivariate chain rule in the forward direction, without backward traversal.

Initialize the dual network input as $z_0\triangleq x+v\eps\in\BBD^{n_0}$, with base real state $\hat a_0\triangleq x$ and base dual directional vector $\hat c_0\triangleq v$. Recall the sequence of pre-activation vectors $a_i$ and post-activation vectors $\hat a_i$ defined in Section~\ref{sec:reverse_mode}. The following proposition formalizes the layer-wise propagation of the dual components.

\begin{proposition}\label{prop:trace}
	For all $i\in\{1,\ldots,d\}$, let $y_i\triangleq\tilde\ell_i(z_{i-1})\in\BBD^{n_i}$ denote the intermediate dual vector emerging from the affine mapping, and define its dual component as $c_i\triangleq\Dual(y_i)\in\BBR^{n_i}$. For all $i\in\{1,\ldots,d-1\}$, let $z_i\triangleq\tilde\sigma_i(y_i)\in\BBD^{n_i}$ denote the dual vector emerging from the activation layer, and define its dual component as $\hat c_i\triangleq\Dual(z_i)\in\BBR^{n_i}$. Propagating the initial state $z_0$ through the dual-extended network $\tilde h_\theta$ yields the exact layer-wise recurrence
	\begin{align}
		c_i &= W_i\hat c_{i-1}, && i\in\{1,\ldots,d\},\label{eq:dual_affine_trace}\\
		\hat c_i &= \diag(\sigma_i'(a_i))c_i, && i\in\{1,\ldots,d-1\}.\label{eq:dual_act_trace}
	\end{align}
	Consequently, the terminal dual component satisfies
	\begin{align}
		c_d &= W_d\diag(\sigma_{d-1}'(a_{d-1}))\cdots\diag(\sigma_1'(a_1))W_1 v \nn\\
		&= \nabla h_\theta(x)^\rmT v.
		\label{eq:explicit_chain}
	\end{align}
\end{proposition}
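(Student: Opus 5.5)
The recurrence will come from an induction over the layers that tracks real and dual components together. The real parts also need tracking, because the activation derivative depends on them.

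The inductive claim is: for every $i\in\{1,\ldots,d\}$, $y_i=a_i+c_i\eps$, and for $i\le d-1$, $z_i=\hat a_i+\hat c_i\eps$, where $a_i$ and $\hat a_i$ are the real pre- and post-activation vectors of Section~\ref{sec:reverse_mode}. The base case is $z_0=\hat a_0+\hat c_0\eps$ with $\hat a_0=x$ and $\hat c_0=v$, which holds by definition.

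For the affine step I apply Lemma~\ref{lem:dual_affine} to $z_{i-1}=\hat a_{i-1}+\hat c_{i-1}\eps$. This gives
\[
y_i=(W_i\hat a_{i-1}+b_i)+W_i\hat c_{i-1}\eps .
\]
The real part is $\ell_i(\hat a_{i-1})=a_i$, and the dual part is $c_i=W_i\hat c_{i-1}$, which is~\eqref{eq:dual_affine_trace}.

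For the activation step I apply Lemma~\ref{lem:dual_activation} to $y_i=a_i+c_i\eps$. This needs the differentiability assumption of Theorem~\ref{thm:main} at the encountered preactivations $a_{i(j)}$, so I will state that hypothesis explicitly. The lemma gives $z_i=\sigma_i(a_i)+\diag(\sigma_i'(a_i))c_i\eps$, whose real part is $\hat a_i$ and whose dual part is~\eqref{eq:dual_act_trace}. The key point is that the dual part never influences the real part, so the real parts coincide with the ordinary forward pass. That is why the Jacobians are evaluated at the true $a_i$.

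Next I unroll the recurrence from $i=d$ down to $\hat c_0=v$. Substituting~\eqref{eq:dual_act_trace} into~\eqref{eq:dual_affine_trace} repeatedly gives the product in the first line of~\eqref{eq:explicit_chain}. This is a finite telescoping substitution, so a one-line induction on the number of layers suffices.

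For the second equality I see two routes.
\begin{itemize}
\item Apply Theorem~\ref{thm:main} with a direction field $\xi$ satisfying $\xi(x)=v$; a constant field $\xi\equiv v$ is continuously differentiable. This yields $c_d=\Dual(\tilde h_\theta(z_0))=\nabla h_\theta(x)^\rmT v$.
\item Use the multivariate chain rule directly: the matrix product $W_d\diag(\sigma_{d-1}'(a_{d-1}))\cdots W_1$ equals $h_\theta'(x)=\nabla h_\theta(x)^\rmT$, because $\ell_i'=W_i$ and $\sigma_i'=\diag(\sigma_i'(a_i))$ at the encountered points.
\end{itemize}

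The only delicate point is not computational. The statement as written omits the pointwise differentiability hypothesis, and this hypothesis is needed both to invoke Lemma~\ref{lem:dual_activation} and for $\nabla h_\theta(x)$ to exist. I will therefore note that the recurrence~\eqref{eq:dual_affine_trace}--\eqref{eq:dual_act_trace} holds whenever each $\bar\sigma_i'(a_{i(j)})$ is defined, including under the ReLU convention. The gradient identity in~\eqref{eq:explicit_chain}, however, is asserted only at differentiability points, consistent with Remark~\ref{rem:kinks}.
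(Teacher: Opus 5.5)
Your proposal is correct and follows essentially the paper's route: induction on the layer index with $z_{i-1}=\hat a_{i-1}+\hat c_{i-1}\eps$, Lemma~\ref{lem:dual_affine} for the affine step, Lemma~\ref{lem:dual_activation} for the activation step, then unrolling and the multivariate chain rule (your second route) to identify $c_d$ with $\nabla h_\theta(x)^\rmT v$. You are also right that the statement silently relies on the pointwise differentiability hypothesis of Theorem~\ref{thm:main}, since the paper's proof invokes Lemma~\ref{lem:dual_activation} without restating it.
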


\begin{proof}
	We proceed by induction on the layer index $i$. The base case $i=0$ holds via the initialization $z_0=x+v\eps$ and $\hat c_0=v$. For the inductive step, assume the dual vector entering affine layer $i$ satisfies $z_{i-1}=\hat a_{i-1}+\hat c_{i-1}\eps$.
	
	Lemma~\ref{lem:dual_affine} evaluates the affine mapping as $y_i=\tilde\ell_i(z_{i-1})=(W_i\hat a_{i-1}+b_i)+W_i\hat c_{i-1}\eps$. Since $W_i\hat a_{i-1}+b_i$ is exactly the pre-activation $a_i$, it follows that $y_i=a_i+W_i\hat c_{i-1}\eps$, whose dual component confirms~\eqref{eq:dual_affine_trace}.
	For all hidden layers $i<d$, Lemma~\ref{lem:dual_activation} evaluates the activation mapping as $z_i=\tilde\sigma_i(y_i)=\hat a_i+\diag(\sigma_i'(a_i))c_i\eps$, whose dual component confirms~\eqref{eq:dual_act_trace}.
	
	Since the terminal layer $d$ is strictly affine, $c_d=\Dual(y_d)=W_d\hat c_{d-1}$. Composing the recurrences~\eqref{eq:dual_act_trace} and~\eqref{eq:dual_affine_trace} from $i=1$ to $i=d$ strictly isolates the sequence of Jacobian products in~\eqref{eq:explicit_chain}, which, by the multivariate chain rule, equals $h_\theta'(x)v=\nabla h_\theta(x)^\rmT v$.
\end{proof}

\begin{remark}\label{rem:chain_rule}
	Equation~\eqref{eq:explicit_chain} reveals the mechanism that bypasses backpropagation. Each activation Jacobian $\diag(\sigma_i'(a_i))$ is evaluated alongside the forward affine mapping, immediately scaled against the incoming dual vector, and discarded. The dual algebra collapses the chain rule into a memoryless forward recurrence, so the terminal output $c_d$ encapsulates the complete derivative without retaining the intermediate activations $\{a_i\}_{i=1}^{d-1}$.\exmark
\end{remark}

Fig.~\ref{fig:architecture_schematic} contrasts the two evaluation architectures. On the left, the runtime-style reverse-mode pipeline of Section~\ref{sec:reverse_mode} executes in two phases: the forward pass writes one cache entry per layer onto the heap (write bus, top), and the backward pass reads and releases those entries in reverse order (read bus, bottom). The number of live cache entries peaks at $d-1$ when the backward pass begins, which is the workspace scaling of Proposition~\ref{prop:memory}\ref{prop2_p1} and, in runtime implementations, the source of the per-call allocation traffic. On the right, the dual compiler executes the recurrences~\eqref{eq:dual_affine_trace} and~\eqref{eq:dual_act_trace} in a single downward sweep: every layer reads the preceding dual state from the same statically placed float array and overwrites it in place, so exactly one buffer is live at any instant regardless of depth, matching Proposition~\ref{prop:memory}\ref{prop2_p2} and Corollary~\ref{cor:static_memory}. The figure thereby visualizes the two claims that the hardware experiments of Section~\ref{sec:examples} quantify: the workspace asymmetry and the absence of allocation call sites in the compiled code.

\begin{figure*}[t]
	\centering
	\resizebox{\textwidth}{!}{%
		\begin{tikzpicture}[
			>=Stealth,
			block/.style={rectangle, draw=black, rounded corners, fill=white, text centered, minimum height=2.8em, minimum width=3.4cm, text width=3.3cm, thick},
			mem_red/.style={rectangle, draw=mplred, fill=red!10, text centered, minimum height=1.6em, minimum width=2.4cm, text width=2.3cm, thick},
			mem_blue/.style={rectangle, draw=mplblue, fill=safebg, text centered, minimum height=3.2em, minimum width=2.4cm, text width=2.3cm, thick},
			arrow/.style={->, thick},
			dash_arrow/.style={->, dashed, thick},
			lbl/.style={font=\scriptsize\sffamily, fill=white, inner sep=2pt}
			]
			
			\node[font=\bfseries\sffamily\large] at (-4.9, 6.5) {Reverse-Mode AD};
			\node[font=\bfseries\sffamily\large] at (4.9, 6.5) {Dual CBF Compiler};
			
			\node[block] (in_ad) at (-2.2, 5) {Primal Input $x$};
			\node[block] (fwd_ad) at (-2.2, 3.5) {Forward Pass \\ (Evaluate $h_\theta$)};
			\node[block] (bwd_ad) at (-2.2, 1.5) {Backward Pass \\ (Chain Rule)};
			\node[block] (out_ad) at (-2.2, 0) {Output $\nabla h_\theta(x)$};
			
			\node[font=\bfseries\sffamily, text=mplred] (heap_title) at (-7.6, 4.6) {Dynamic Heap};
			\node[mem_red] (t1) at (-7.6, 3.7) {Layer 1 Cache};
			\node[mem_red] (t2) at (-7.6, 3.0) {Layer 2 Cache};
			\node[mem_red, draw=none, fill=none] (t3) at (-7.6, 2.4) {$\vdots$};
			\node[mem_red] (t4) at (-7.6, 1.8) {Layer $d$ Cache};
			
			\begin{scope}[on background layer]
				\node[rectangle, draw=red!50, fill=red!5, dashed, thick, inner sep=8pt, fit=(heap_title) (t1) (t4)] (heap_box) {};
			\end{scope}
			
			\draw[arrow] (in_ad) -- (fwd_ad);
			\draw[arrow] (fwd_ad) -- node[lbl] {Build Graph} (bwd_ad);
			\draw[arrow] (bwd_ad) -- (out_ad);
			
			\draw[mplred, thick] ([yshift=0.34cm]fwd_ad.west) -- (-5.6, 3.84);
			\path ([yshift=0.34cm]fwd_ad.west) -- (-5.6, 3.84) node[midway, lbl, above, text=mplred,xshift=-.2cm,yshift=.05cm] {Allocates $\mathcal{O}(d)$};
			\draw[mplred, thick] (-5.6, 3.84) -- (-5.6, 1.94);
			\draw[arrow, mplred] (-5.6, 3.84) -- ([yshift=4pt]t1.east);
			\draw[arrow, mplred] (-5.6, 3.14) -- ([yshift=4pt]t2.east);
			\draw[arrow, mplred] (-5.6, 1.94) -- ([yshift=4pt]t4.east);
			
			\draw[mplred, thick] ([yshift=-4pt]t1.east) -- (-4.9, 3.56);
			\draw[mplred, thick] ([yshift=-4pt]t2.east) -- (-4.9, 2.86);
			\draw[mplred, thick] ([yshift=-4pt]t4.east) -- (-4.9, 1.66);
			\draw[mplred, thick] (-4.9, 3.56) -- (-4.9, 1.5);
			\draw[arrow, mplred] (-4.9, 1.5) -- ([yshift=00pt]bwd_ad.west);
			\path (-4.9, 1.0) -- ([yshift=0pt]bwd_ad.west) node[midway, lbl, below, text=mplred, yshift=.2cm, xshift=-0.4cm] {Reads \& Pops};
			
			\node[block, fill=blue!5, draw=mplblue] (in_dual) at (2.2, 5) {Dual Input\\ $z_0 = x + v\eps$};
			\node[block, fill=blue!5, draw=mplblue] (l1_dual) at (2.2, 3.5) {Dual Layer 1\\ (Propagate $c_1, \hat{c}_1$)};
			\node[block, fill=blue!5, draw=mplblue] (l2_dual) at (2.2, 2.2) {Dual Layer 2\\ (Propagate $c_2, \hat{c}_2$)};
			\node[block, draw=none, fill=none] (dots) at (2.2, 1.2) {$\vdots$};
			\node[block, fill=blue!5, draw=mplblue] (out_dual) at (2.2, 0) {Output\\ $c_d = \nabla h_\theta(x)^\rmT v$};
			
			\node[font=\bfseries\sffamily, text=mplblue] (bss_title) at (7.6, 4.05) {Static Buffer (Data Segment)};
			\node[mem_blue] (buffer) at (7.6, 2.85) {Single Float Array\\ (Pre-allocated)};
			
			\begin{scope}[on background layer]
				\node[rectangle, draw=mplblue, fill=blue!5, dashed, thick, inner sep=12pt, fit=(bss_title) (buffer)] (bss_box) {};
			\end{scope}
			
			\draw[arrow] (in_dual) -- (l1_dual);
			\draw[arrow] (l1_dual) -- (l2_dual);
			\draw[arrow] (l2_dual) -- (dots);
			\draw[arrow] (dots) -- (out_dual);
			
			\draw[arrow, mplblue] (l1_dual.east) -- (5.2, 3.5) |- ([yshift=10pt]buffer.west);
			\path (l1_dual.east) -- (5.2, 3.5) node[midway, lbl, above, text=mplblue,yshift=0.05cm] {Writes};
			
			\draw[arrow, mplblue] (l2_dual.east) -- (5.2, 2.2) |- ([yshift=-10pt]buffer.west);
			\path (l2_dual.east) -- (5.2, 2.2) node[midway, lbl, below, text=mplblue,yshift=-.05cm,xshift=.4cm] {Overwrites $\mathcal{O}(1)$};
			
			\draw[dashed, gray, thick] (0, 6.8) -- (0, -0.5);
			
		\end{tikzpicture}%
	}
	\caption{Architectural comparison of gradient extraction mechanisms. \textbf{Left:} reverse-mode AD requires a dynamic backward traversal and allocates a heap-resident computational graph whose size scales as $\mathcal{O}(d)$. \textbf{Right:} the proposed dual compiler implements the chain rule as a forward recurrence and overwrites a fixed-size $\mathcal{O}(1)$ static buffer in the data segment, admitting deterministic, zero-allocation Lie derivative extraction.}
	\label{fig:architecture_schematic}
\end{figure*}
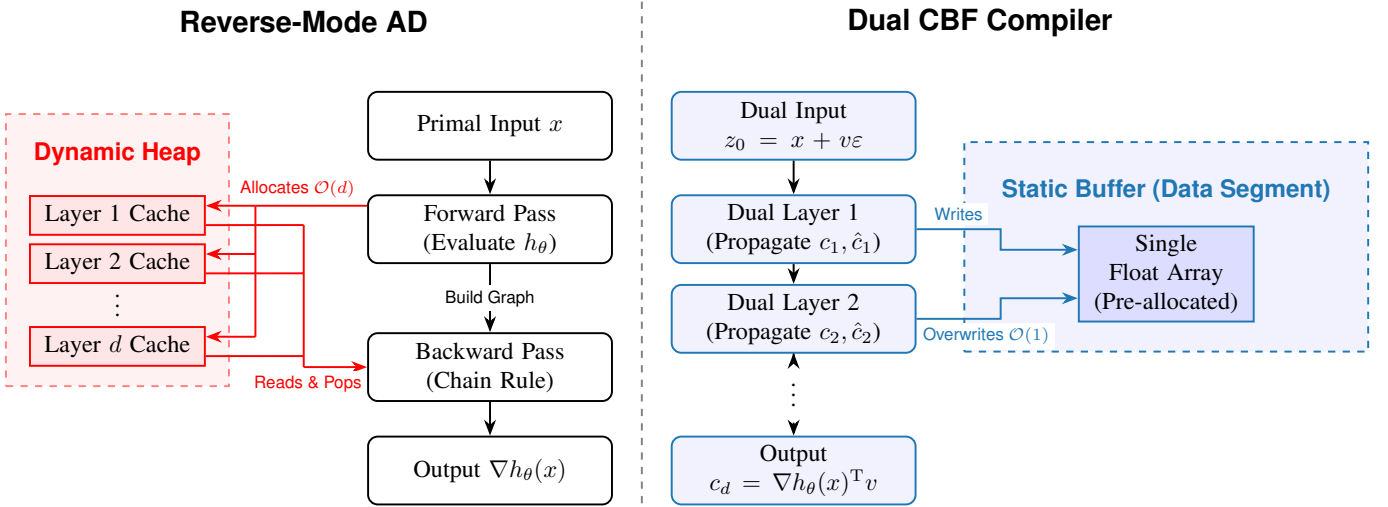
\section{Higher-Order Lie Derivatives}
\label{sec:higher_order}

Enforcing safety on mechanical systems with acceleration inputs, such as fully actuated robotic manipulators or quadrotors, requires CBFs of relative degree two or higher~\cite{xiao2021high_order_cbf}. The corresponding safety constraint involves the second-order Lie derivatives $L_f^2 h_\theta$ and, for all $j\in\{1,\ldots,m\}$, $L_{\col_j(G)} L_f h_\theta$. The first-order dual algebra extends to this setting by adjoining a second nilpotent element, producing the \emph{hyper-dual numbers}.

\subsection{Hyper-Dual Algebra}

Adjoining two distinct nilpotent elements $\eps_1$ and $\eps_2$ to the real numbers, subject to $\eps_1^2=\eps_2^2=0$, the commutativity relation $\eps_1\eps_2=\eps_2\eps_1$, and the cross-product nilpotency $(\eps_1\eps_2)^2=0$, produces the four-dimensional commutative algebra
\begin{equation}
	\BBD_2\triangleq\{a+b\eps_1+c\eps_2+d\eps_{12}\colon a,b,c,d\in\BBR\},
	\label{eq:hyper_algebra}
\end{equation}
where $\eps_{12}\triangleq\eps_1\eps_2$. For $z=a+b\eps_1+c\eps_2+d\eps_{12}\in\BBD_2$, the extraction operators $\Real,\Dual_1,\Dual_2,\Dual_{12}\colon\BBD_2\to\BBR$ are defined by $\Real(z)\triangleq a$, $\Dual_1(z)\triangleq b$, $\Dual_2(z)\triangleq c$, and $\Dual_{12}(z)\triangleq d$.

The mechanism for exact second-order differentiation parallels Definition~\ref{def:dual_ext}: the hyper-dual extension is defined so that the $\eps_1$ and $\eps_2$ components carry first derivatives and the $\eps_{12}$ component carries the second-order term, and this definition is consistent with the arithmetic of $\BBD_2$ in the same sense as Proposition~\ref{thm:exact_diff}. The proof is provided in Appendix~\ref{app:thm_hyper_exact_diff}.

\begin{definition}\label{def:hyper_ext}
	Let $\varphi\colon I\subseteq\BBR\to\BBR$ be twice differentiable at $a\in I$. The hyper-dual extension of $\varphi$ at $z=a+b\eps_1+c\eps_2+d\eps_{12}\in\BBD_2$ is
	\begin{align}
		\tilde\varphi(z) &\triangleq \varphi(a)+\varphi'(a)b\eps_1+\varphi'(a)c\eps_2\nn\\
		&\quad+\big(\varphi''(a)bc+\varphi'(a)d\big)\eps_{12}.
		\label{eq:hyper_scalar}
	\end{align}
\end{definition}

\begin{proposition}\label{thm:hyper_exact_diff}
	Let $\varphi,\psi\colon I\subseteq\BBR\to\BBR$ be twice differentiable at $a\in I$, and let $z\in\BBD_2$ with $\Real(z)=a$. Then, $\widetilde{(\varphi+\psi)}(z)=\tilde\varphi(z)+\tilde\psi(z)$ and $\widetilde{(\varphi\psi)}(z)=\tilde\varphi(z)\tilde\psi(z)$; if, in addition, $\psi$ is twice differentiable at $\varphi(a)$, then $\widetilde{(\psi\circ\varphi)}(z)=\tilde\psi(\tilde\varphi(z))$. Moreover, if $\varphi$ is a polynomial, then $\tilde\varphi(z)$ coincides with the evaluation of that polynomial at $z$ using the arithmetic of $\BBD_2$.
\end{proposition}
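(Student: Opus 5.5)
The proof is a direct component-wise verification. The first step is to write out the multiplication rule of $\BBD_2$. For $z_k=a_k+b_k\eps_1+c_k\eps_2+d_k\eps_{12}$, expanding with $\eps_1^2=\eps_2^2=0$ and $\eps_1\eps_2=\eps_{12}$ gives
\begin{align}
z_1z_2&=a_1a_2+(a_1b_2+a_2b_1)\eps_1+(a_1c_2+a_2c_1)\eps_2\nn\\
&\quad+(a_1d_2+a_2d_1+b_1c_2+b_2c_1)\eps_{12}.\nn
\end{align}
Every product of three or more nilpotent factors vanishes, since $\eps_1\eps_{12}=\eps_1^2\eps_2=0$ and $\eps_{12}^2=0$. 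With this rule fixed, each claim of Proposition~\ref{thm:hyper_exact_diff} reduces to comparing four real coefficients.

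For the sum, the real, $\eps_1$ and $\eps_2$ parts agree because differentiation is linear. The $\eps_{12}$ part is $(\varphi+\psi)''(a)bc+(\varphi+\psi)'(a)d$, which splits termwise.

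For the product, insert the components of $\tilde\varphi(z)$ and $\tilde\psi(z)$ from \eqref{eq:hyper_scalar} into the multiplication rule. The first-order parts reproduce the Leibniz rule exactly as in Proposition~\ref{thm:exact_diff}. The $\eps_{12}$ coefficient becomes
\[
\varphi(\psi''bc+\psi'd)+\psi(\varphi''bc+\varphi'd)+\varphi'\psi'(bc+cb).
\]
This matches $(\varphi\psi)''bc+(\varphi\psi)'d$ by the second-order Leibniz formula $(\varphi\psi)''=\varphi''\psi+2\varphi'\psi'+\varphi\psi''$.

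For the composition, set $y=\varphi(a)$. Then $\tilde\varphi(z)$ has components
\[
\big(y,\;\varphi'(a)b,\;\varphi'(a)c,\;\varphi''(a)bc+\varphi'(a)d\big),
\]
and applying \eqref{eq:hyper_scalar} for $\psi$ at $y$ gives first-order parts $\psi'(y)\varphi'(a)b$ and $\psi'(y)\varphi'(a)c$, as the chain rule requires. The $\eps_{12}$ part is
\[
\psi''(y)\varphi'(a)^2bc+\psi'(y)\big(\varphi''(a)bc+\varphi'(a)d\big).
\]
This equals $(\psi\circ\varphi)''(a)bc+(\psi\circ\varphi)'(a)d$ by Faà di Bruno at second order, $(\psi\circ\varphi)''=\psi''(\varphi)\varphi'^2+\psi'(\varphi)\varphi''$.

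There is one technical subtlety. Twice differentiability at a point is needed for the second-order chain rule, and this requires $\varphi'$ to exist in a neighbourhood of $a$. I will read the hypothesis "twice differentiable at $a$" in that standard sense, so the pointwise formulas above are legitimate.

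For polynomials, I will show the claim for the identity map and the constants, and then close under sums and products. For the identity, $\tilde\varphi(z)=a+b\eps_1+c\eps_2+(0\cdot bc+1\cdot d)\eps_{12}=z$. For a constant $k$, $\tilde\varphi(z)=k$. The sum and product parts already proven then give, by induction on the degree (or on the expression tree), that $\tilde p(z)$ equals the $\BBD_2$ evaluation of $p$ at $z$.

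The main obstacle is only the bookkeeping of the $\eps_{12}$ coefficient in the product and composition cases. There the cross terms $b_1c_2+b_2c_1$ must be matched against the factor $2\varphi'\psi'$ and against $\varphi'^2$ respectively. I will write those two coefficient identities out explicitly and treat everything else as routine.
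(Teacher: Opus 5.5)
Your proof is correct. For the sum, product, and composition claims it follows the paper's argument: expand in the basis $\{1,\eps_1,\eps_2,\eps_{12}\}$, then match the $\eps_{12}$ coefficient against the second-order Leibniz rule and against Fa\`a di Bruno at second order. The paper only organizes that expansion through $z=a+\eta$ with $\eta^2=2bc\eps_{12}$ and $\eta^3=0$, rather than writing out the full multiplication table.

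The genuine difference is the polynomial claim. The paper computes it directly. The binomial theorem, truncated by $\eta^k=0$ for $k\geq 3$, gives $(a+\eta)^k=a^k+ka^{k-1}\eta+\tfrac12 k(k-1)a^{k-2}\eta^2$, and summing against the coefficients exhibits $\varphi(a)$, $\varphi'(a)$, $\varphi''(a)$ explicitly. You instead verify the identity map and constants, then propagate through the expression tree using the sum and product rules you have just proved.

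Your route has three advantages. It is shorter. It is well posed, because evaluation at $z$ is an $\BBR$-algebra homomorphism $\BBR[x]\to\BBD_2$, so the result does not depend on how the polynomial is written. It also generalizes at no cost: the same induction, combined with the chain rule, justifies operator-overloading evaluation of any expression built from primitives whose extensions are known. The paper's route is independent of the product rule. It also shows directly that the hyper-dual extension is the second-order Taylor truncation $\varphi(a)+\varphi'(a)\eta+\tfrac12\varphi''(a)\eta^2$.

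You also note that ``twice differentiable at $a$'' must be read as requiring $\varphi'$ to exist near $a$, and likewise $\psi'$ near $\varphi(a)$. The paper relies on this tacitly when it invokes the second-order Leibniz and Fa\`a di Bruno formulas.
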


A hyper-dual vector $z=y+v\eps_1+w\eps_2+u\eps_{12}\in\BBD_2^n$ with $y,v,w,u\in\BBR^n$ extends the scalar arithmetic componentwise.

\begin{corollary}\label{cor:hyper_vec}
	Let $\varphi\colon\SD\subseteq\BBR^n\to\BBR^p$ be twice continuously differentiable. The hyper-dual extension of $\varphi$ evaluated at $z=y+v\eps_1+w\eps_2+u\eps_{12}\in\BBD_2^n$ satisfies
	\begin{align}
		\tilde\varphi(z) &= \varphi(y)+\varphi'(y)v\eps_1+\varphi'(y)w\eps_2 \nn\\
		&\quad+\big(\varphi''(y)[v,w]+\varphi'(y)u\big)\eps_{12},
		\label{eq:hyper_vec}
	\end{align}
	where $\varphi''(y)[v,w]$ is the bilinear Hessian action defined in Section~\ref{sec:notation}.
\end{corollary}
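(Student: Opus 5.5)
Corollary~\ref{cor:hyper_vec} is stated for a vector map evaluated at a hyper-dual vector, while Definition~\ref{def:hyper_ext} only covers scalar functions. The first task is therefore to pin down what $\tilde\varphi(z)$ means in the vector case. I will take the same route the paper takes in Corollary~\ref{cor:vec_ext}: the vector hyper-dual extension is whatever the scalar arithmetic produces, computed component by component. The plan has three steps: reduce to one component, reduce to a scalar curve, and then use Proposition~\ref{thm:hyper_exact_diff}.

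\emph{Step one: reduce to scalars.} Since both sides of~\eqref{eq:hyper_vec} are built componentwise, I fix $k\in\{1,\ldots,p\}$ and study only $\varphi_{(k)}\colon\SD\to\BBR$. The $k$th component of the right-hand side is
\[
\varphi_{(k)}(y)+\nabla\varphi_{(k)}(y)^\rmT v\,\eps_1+\nabla\varphi_{(k)}(y)^\rmT w\,\eps_2+\big(v^\rmT\nabla^2\varphi_{(k)}(y)w+\nabla\varphi_{(k)}(y)^\rmT u\big)\eps_{12},
\]
by the definition of the bilinear Hessian action in Section~\ref{sec:notation}.

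\emph{Step two: justify this as the extension.} The key is to pass to a function of two real variables. Define
\[
g(s,t)\triangleq\varphi_{(k)}\big(y+sv+tw+stu\big).
\]
Because $\varphi$ is $C^2$, $g$ is $C^2$ near the origin. A direct computation gives $g_s(0,0)=\nabla\varphi_{(k)}(y)^\rmT v$ and $g_t(0,0)=\nabla\varphi_{(k)}(y)^\rmT w$. For the mixed partial, differentiate $g_s(s,t)=\nabla\varphi_{(k)}(\cdot)^\rmT(v+tu)$ in $t$. This yields $g_{st}(0,0)=v^\rmT\nabla^2\varphi_{(k)}(y)w+\nabla\varphi_{(k)}(y)^\rmT u$, using symmetry of the Hessian under $C^2$.

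The substitution $s\mapsto\eps_1$, $t\mapsto\eps_2$ sends $y+sv+tw+stu$ to $z$. Since $\eps_1^2=\eps_2^2=0$, only the monomials $1$, $s$, $t$, $st$ of a second-order Taylor expansion survive. The hyper-dual value is therefore $g+g_s\eps_1+g_t\eps_2+g_{st}\eps_{12}$, which is exactly the displayed component.

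\emph{Step three: consistency with the arithmetic.} To make Step two rigorous rather than formal, I need the componentwise definition to agree with evaluating $\varphi$ by $\BBD_2$ arithmetic, in the same way Corollary~\ref{cor:vec_ext} did for first order. For polynomial $\varphi_{(k)}$ this follows from the polynomial clause of Proposition~\ref{thm:hyper_exact_diff}: expand the polynomial in $z$ and drop all terms of degree two or more in $\eps_1$ or in $\eps_2$. For general $C^2$ functions I will adopt~\eqref{eq:hyper_vec} as the definition, exactly as Corollary~\ref{cor:vec_ext} does. I then verify that it reduces to Definition~\ref{def:hyper_ext} when $n=p=1$: with $\varphi''(y)[v,w]=\varphi''(y)vw$, it gives $\varphi(y)+\varphi'(y)v\eps_1+\varphi'(y)w\eps_2+(\varphi''(y)vw+\varphi'(y)u)\eps_{12}$, which matches.

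\emph{Main obstacle.} The hard part is not the calculus; it is making sure the $\eps_{12}$ coefficient collects every cross contribution. There are two sources of cross terms. One is the product $(sv)(tw)$, which produces the Hessian term. The other is the seeded $stu$ term, which enters only through the gradient. A quick check is to confirm that $(v\eps_1+w\eps_2+u\eps_{12})^{\otimes2}$ reduces to $2vw\,\eps_{12}$ in the symmetric sense. Combined with the factor $\tfrac12$ in the Taylor expansion, this gives exactly $\varphi''(y)[v,w]$ and no factor of two error.
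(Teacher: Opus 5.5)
Your proposal is correct and follows essentially the paper's route: the paper likewise treats \eqref{eq:hyper_vec} as the multivariate form of Definition~\ref{def:hyper_ext}, justifies it by the second-order Taylor coefficient along the pair $(v,w)$, and delegates consistency componentwise to Proposition~\ref{thm:hyper_exact_diff}, so your $g(s,t)$ mixed-partial computation is an explicit version of the paper's one-line Taylor remark (including the $\nabla\varphi_{(k)}(y)^\rmT u$ term). One small point: the polynomial clause of Proposition~\ref{thm:hyper_exact_diff} is univariate, so for multivariate polynomial components the cleaner justification is your own substitution, read as the ring homomorphism $\BBR[s,t]\to\BBD_2\cong\BBR[s,t]/(s^2,t^2)$.
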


\begin{proof}
	Equation~\eqref{eq:hyper_vec} is the multivariate form of Definition~\ref{def:hyper_ext}: the $\eps_1$ and $\eps_2$ components carry the directional derivatives $\varphi'(y)v$ and $\varphi'(y)w$, and the second-order Taylor coefficient along the pair $(v,w)$ supplies the bilinear Hessian action in the $\eps_{12}$ component. Consistency of this extension with sums, products, and compositions follows componentwise from Proposition~\ref{thm:hyper_exact_diff}.
\end{proof}

\subsection{Second-Order Neural Network Layers}

Extending the layer-wise propagation of Section~\ref{sec:dual_nn} to the hyper-dual domain yields the second-order Lie derivative engine.

\begin{lemma}\label{lem:hyper_layers}
	Let $z=y+v\eps_1+w\eps_2+u\eps_{12}\in\BBD_2^{n_{i-1}}$. The following statements hold:
	\begin{enumerate}[label={\it{\roman*}})]
		\item\label{l3_p1} For all $i\in\{1,\ldots,d\}$, the hyper-dual extension of the affine layer $\ell_i$ satisfies
		\begin{equation}
			\tilde\ell_i(z) = (W_i y+b_i)+W_i v\eps_1+W_i w\eps_2+W_i u\eps_{12}.
			\label{eq:hyper_affine}
		\end{equation}
		\item\label{l3_p2} For all $i\in\{1,\ldots,d-1\}$, assume that, for all $k\in\{1,\ldots,n_{i-1}\}$, $\bar\sigma_i$ is twice differentiable at $y_{(k)}$. Then, the hyper-dual extension of the activation layer $\sigma_i$ satisfies
		\begin{align}
			\tilde\sigma_i(z) &= \sigma_i(y)+\diag(\sigma_i'(y))v\eps_1\nn\\
			&\quad+\diag(\sigma_i'(y))w\eps_2 \nn\\
			&\quad+\big[\diag(\sigma_i''(y))(v\odot w)\nn\\
			&\quad\quad+\diag(\sigma_i'(y))u\big]\eps_{12},
			\label{eq:hyper_activation}
		\end{align}
		where $\sigma_i''(y)\triangleq[\bar\sigma_i''(y_{(1)})~\cdots~\bar\sigma_i''(y_{(n_{i-1})})]^\rmT\in\BBR^{n_{i-1}}$.
	\end{enumerate}
\end{lemma}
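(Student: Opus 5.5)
The plan is to treat the two parts separately, reducing each to Corollary~\ref{cor:hyper_vec} applied to the specific map. The affine part needs only that the second derivative of an affine map vanishes. The activation part needs only that the Hessian of a componentwise map is ``diagonal'' in a tensor sense.

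For part~\ref{l3_p1}, apply Corollary~\ref{cor:hyper_vec} with $\varphi=\ell_i$, which is smooth on all of $\BBR^{n_{i-1}}$. Since $\ell_i'(y)=W_i$ for every $y$, the $\eps_1$, $\eps_2$ components are $W_iv$ and $W_iw$. Each component $\ell_{i(k)}(y)=\mathrm{row}_k(W_i)y+b_{i(k)}$ is affine, so $\nabla^2\ell_{i(k)}\equiv0$. Hence $\ell_i''(y)[v,w]=0$ and the $\eps_{12}$ component reduces to $W_iu$, which is~\eqref{eq:hyper_affine}. As a cross-check, one can instead note that $\tilde\ell_i(z)=W_iz+b_i$ evaluated in $\BBD_2$ arithmetic with $W_i$ embedded as real; this is the polynomial clause of Proposition~\ref{thm:hyper_exact_diff} applied row by row, and it gives the same result by distributivity.

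For part~\ref{l3_p2}, I would avoid invoking Corollary~\ref{cor:hyper_vec} directly, since it assumes $\varphi$ is $C^2$ on an open set while the lemma assumes only pointwise twice-differentiability at each $y_{(k)}$. Instead I would argue componentwise from the scalar Definition~\ref{def:hyper_ext}. Because $\sigma_i$ acts elementwise and the hyper-dual vector arithmetic is componentwise, the $k$th component of $\tilde\sigma_i(z)$ is the scalar extension $\tilde{\bar\sigma}_i$ evaluated at $y_{(k)}+v_{(k)}\eps_1+w_{(k)}\eps_2+u_{(k)}\eps_{12}$. By~\eqref{eq:hyper_scalar} this equals
\[
\bar\sigma_i(y_{(k)})+\bar\sigma_i'(y_{(k)})v_{(k)}\eps_1+\bar\sigma_i'(y_{(k)})w_{(k)}\eps_2+\big(\bar\sigma_i''(y_{(k)})v_{(k)}w_{(k)}+\bar\sigma_i'(y_{(k)})u_{(k)}\big)\eps_{12}.
\]
Stacking over $k$ then turns the scalar factors $\bar\sigma_i'(y_{(k)})$ into $\diag(\sigma_i'(y))$, and the products $v_{(k)}w_{(k)}$ into $(v\odot w)$ scaled by $\diag(\sigma_i''(y))$. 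This yields~\eqref{eq:hyper_activation}. For consistency with Corollary~\ref{cor:hyper_vec} in the $C^2$ case, I would also check that $\nabla^2\sigma_{i(k)}(y)=\bar\sigma_i''(y_{(k)})e_ke_k^\rmT$. This gives $\sigma_i''(y)[v,w]=\diag(\sigma_i''(y))(v\odot w)$: the Hessian action collapses to a Hadamard product because the only nonzero Hessian entries of each component lie on the diagonal.

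The main obstacle is not computational but definitional. One must make clear that the vector hyper-dual extension of an elementwise map coincides with the elementwise scalar extension, so that the pointwise hypothesis suffices. I would settle this by taking the componentwise definition as the meaning of $\tilde\sigma_i$, consistent with~\eqref{eq:componentwise_activation}. A minor index point also needs flagging: in part~\ref{l3_p2} the input dimension should match $n_i$, since $\sigma_i$ acts on $\BBR^{n_i}$. The argument is dimension-agnostic, so I would simply read $z\in\BBD_2^{n_i}$ there.
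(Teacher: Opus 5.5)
Your proof is correct. Part (i) is exactly the paper's argument: substitute $\ell_i'=W_i$ and $\ell_i''=0$ into Corollary~\ref{cor:hyper_vec}.

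Part (ii) takes a different route. The paper goes through Corollary~\ref{cor:hyper_vec} there as well. It notes that the Jacobian of $\sigma_i$ is $\diag(\sigma_i'(y))$ and that the Hessian of the $k$th component has a single nonzero entry $\bar\sigma_i''(y_{(k)})$ on the diagonal. It concludes $\sigma_i''(y)[v,w]=\diag(\sigma_i''(y))(v\odot w)$ and substitutes. This is precisely the computation you keep only as a consistency check. Your main argument instead reads each component directly from the scalar Definition~\ref{def:hyper_ext} and stacks.

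Your route buys a proof that uses exactly the pointwise hypothesis the lemma states. The paper's appeal to Corollary~\ref{cor:hyper_vec} formally needs $\sigma_i$ to be $C^2$ on an open set, which part (ii) does not grant. That gap is harmless downstream, since Theorem~\ref{thm:hyper_lie} assumes $C^2$ activations anyway. The cost is that you must fix, as you do, the convention that $\tilde\sigma_i$ means the componentwise scalar extension. The paper's route buys consistency of language: it states the layer formula in the same bilinear Hessian-action form that Corollary~\ref{cor:hyper_vec} uses, and in which the layers are chained in the proof of Theorem~\ref{thm:hyper_lie}.

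Your index remark is right: in part (ii) the ambient dimension should be $n_i$, not $n_{i-1}$.

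One small quibble concerns your part (i) cross-check. The polynomial clause of Proposition~\ref{thm:hyper_exact_diff} is stated only for univariate polynomials, so it does not literally cover a row of $W_iy+b_i$. The cross-check stands on distributivity of real scalars over the $\BBD_2$ components alone, which suffices.
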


\begin{proof}
	To prove~\ref{l3_p1}, let $i\in\{1,\ldots,d\}$. Since $\ell_i$ is affine, $\ell_i'(y)=W_i$ and $\ell_i''(y)=0$ for all $y\in\BBR^{n_{i-1}}$. Substituting these derivatives into Corollary~\ref{cor:hyper_vec} confirms~\eqref{eq:hyper_affine}.
	
	To prove~\ref{l3_p2}, let $i\in\{1,\ldots,d-1\}$. Since $\sigma_i$ acts componentwise via the scalar activation $\bar\sigma_i$, its Jacobian at $y$ is $\diag(\sigma_i'(y))$, and the $k$th row of its Hessian tensor contains the single nonzero entry $\bar\sigma_i''(y_{(k)})$ on the diagonal. Consequently, for all $k\in\{1,\ldots,n_{i-1}\}$, the bilinear Hessian action satisfies $[\sigma_i''(y)[v,w]]_{(k)}=\bar\sigma_i''(y_{(k)})v_{(k)}w_{(k)}$, which is the $k$th component of $\diag(\sigma_i''(y))(v\odot w)$. Substituting this identity and the Jacobian $\diag(\sigma_i'(y))$ into Corollary~\ref{cor:hyper_vec} confirms~\eqref{eq:hyper_activation}.
\end{proof}

Equation~\eqref{eq:hyper_activation} requires $\bar\sigma_i$ to be twice differentiable at $y$. The ReLU activation fails this condition because its second derivative vanishes almost everywhere, which suppresses the Hessian term $\diag(\sigma_i''(y))(v\odot w)$. Second-order CBFs parameterized by the present framework therefore require smooth activations such as softplus or Swish.

\subsection{Exact Extraction of Second-Order Lie Derivatives}

Evaluating $L_f^2 h_\theta(x)$ and $L_{\col_j(G)} L_f h_\theta(x)$ requires inner products against both the spatial gradient $\nabla h_\theta(x)$ and the Hessian $\nabla^2 h_\theta(x)$. The following result translates this hyper-dual algebra into the neural network domain, proving that propagating specific vector-field seeds through the network extracts the exact second-order Lie derivatives directly from the terminal layer.

\begin{theorem}\label{thm:hyper_lie}
	Consider the feedforward neural network $h_\theta\colon\SD\to\BBR$ defined in~\eqref{eq:nn_def}, and assume that, for all $i\in\{1,\ldots,d-1\}$, $\bar\sigma_i$ is twice continuously differentiable. Let $\tilde h_\theta\triangleq\tilde\ell_d\circ\tilde\sigma_{d-1}\circ\cdots\circ\tilde\sigma_1\circ\tilde\ell_1$ denote the fully hyper-dual-extended network, let $x\in\SD$, and define the hyper-dual drift state $x_{f,f}\in\BBD_2^n$ and, for all $j\in\{1,\ldots,m\}$, the hyper-dual input states $x_{f,G_j}\in\BBD_2^n$ as
	\begin{align}
		x_{f,f} &\triangleq x+f(x)\eps_1+f(x)\eps_2+f'(x)f(x)\eps_{12},\label{eq:seed_ff}\\
		x_{f,G_j} &\triangleq x+f(x)\eps_1+\col_j(G(x))\eps_2\nn\\
		&\quad+f'(x)\col_j(G(x))\eps_{12}.\label{eq:seed_fg}
	\end{align}
	Then, propagating $x_{f,f}$ and $x_{f,G_j}$ through $\tilde h_\theta$ yields
	\begin{align}
		\Dual_{12}\big(\tilde h_\theta(x_{f,f})\big) &= L_f^2 h_\theta(x),\label{eq:extract_l2f}\\
		\Dual_{12}\big(\tilde h_\theta(x_{f,G_j})\big) &= L_{\col_j(G)} L_f h_\theta(x).\label{eq:extract_lglf}
	\end{align}
\end{theorem}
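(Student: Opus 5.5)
The plan is to show first that the fully hyper-dual-extended network obeys the multivariate identity~\eqref{eq:hyper_vec} with $\varphi=h_\theta$. Then I compute the two second-order Lie derivatives by hand and match them to the $\eps_{12}$ component produced by the seeds~\eqref{eq:seed_ff} and~\eqref{eq:seed_fg}.

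For the first step I argue by induction over the layers, in parallel with the proof of Theorem~\ref{thm:main}. Lemma~\ref{lem:hyper_layers} shows that each $\tilde\ell_i$ and each $\tilde\sigma_i$ has the form~\eqref{eq:hyper_vec} for its own map. The twice continuous differentiability of every $\bar\sigma_i$ makes each stage $C^2$ at every point. The key ingredient is a vector chain rule for hyper-dual extensions: if $\varphi$ and $\psi$ are $C^2$, then $\tilde\psi(\tilde\varphi(z))$ equals the extension of $\psi\circ\varphi$ evaluated at $z$. I would verify this directly from~\eqref{eq:hyper_vec}. Write $\tilde\varphi(z)=Y+V\eps_1+W\eps_2+U\eps_{12}$, with $V=\varphi'(y)v$, $W=\varphi'(y)w$ and $U=\varphi''(y)[v,w]+\varphi'(y)u$. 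Applying $\tilde\psi$ gives the $\eps_{12}$ component
\[
\psi''(Y)[\varphi'(y)v,\varphi'(y)w]+\psi'(Y)\varphi''(y)[v,w]+\psi'(Y)\varphi'(y)u.
\]
By the second-order chain rule this is exactly $(\psi\circ\varphi)''(y)[v,w]+(\psi\circ\varphi)'(y)u$. The $\eps_1$ and $\eps_2$ components follow from the first-order chain rule, as in Corollary~\ref{cor:vec_ext}. Iterating over the composition~\eqref{eq:h_tilde_def} therefore gives, for every $z=y+v\eps_1+w\eps_2+u\eps_{12}$,
\[
\Dual_{12}\big(\tilde h_\theta(z)\big)=v^\rmT\nabla^2h_\theta(y)\,w+\nabla h_\theta(y)^\rmT u.
\]
I expect this step to be the main obstacle. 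Corollary~\ref{cor:hyper_vec} is stated only for a single map, so the composition law, and in particular the cross term $\psi'\varphi''$, has to be checked explicitly rather than merely cited.

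For the second step, assume $f$ and $G$ are differentiable, which Assumption~\ref{assum:dynamics} guarantees; the theorem statement does not list it, but the seed $f'(x)$ needs it. Write $L_fh_\theta(x)=\nabla h_\theta(x)^\rmT f(x)$ and differentiate along a vector field $\xi$ using the product rule:
\[
L_\xi L_f h_\theta(x)=f(x)^\rmT\nabla^2h_\theta(x)\,\xi(x)+\nabla h_\theta(x)^\rmT f'(x)\,\xi(x).
\]
This uses the symmetry of $\nabla^2h_\theta$, which holds because $h_\theta$ is $C^2$. Taking $\xi=f$ gives $L_f^2h_\theta(x)$. Taking $\xi=\col_j(G)$ gives $L_{\col_j(G)}L_fh_\theta(x)$.

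To finish, substitute $y=x$, $v=f(x)$, and either $w=f(x)$ with $u=f'(x)f(x)$, or $w=\col_j(G(x))$ with $u=f'(x)\col_j(G(x))$, into the extraction formula from the first step. The result coincides term by term with the expressions above, which establishes~\eqref{eq:extract_l2f} and~\eqref{eq:extract_lglf}.
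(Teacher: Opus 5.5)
Your proposal is correct and follows the same route as the paper. Both arguments iterate Lemma~\ref{lem:hyper_layers} over the layers to obtain $\Dual_{12}\big(\tilde h_\theta(y+v\eps_1+w\eps_2+u\eps_{12})\big)=v^\rmT\nabla^2 h_\theta(y)\,w+\nabla h_\theta(y)^\rmT u$, then substitute the seeds and match against the product-rule expansion of $L_\xi L_f h_\theta$. Your explicit check of the second-order composition law (including the $\psi'\varphi''$ cross term) and your appeal to Hessian symmetry supply details the paper leaves implicit in the phrase ``iterated application.''
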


\begin{proof}
	Since each layer of $h_\theta$ is twice continuously differentiable at $x$, iterated application of Lemma~\ref{lem:hyper_layers}~\ref{l3_p1} and~\ref{l3_p2} implies that, for all $y,v,w,u\in\BBR^n$,
	\begin{align}
		&\tilde h_\theta(y+v\eps_1+w\eps_2+u\eps_{12}) \nn\\
		&\quad= h_\theta(y)+h_\theta'(y)v\eps_1+h_\theta'(y)w\eps_2 \nn\\
		&\quad\quad+\big(v^\rmT\nabla^2 h_\theta(y)w+h_\theta'(y)u\big)\eps_{12},
		\label{eq:iterated_hyper_chain}
	\end{align}
	where the bilinear Hessian action reduces to the scalar $v^\rmT\nabla^2 h_\theta(y)w$ because $h_\theta\colon\SD\to\BBR$ is scalar-valued.
	
	Setting $y=x$, $v=w=f(x)$, and $u=f'(x)f(x)$ in~\eqref{eq:iterated_hyper_chain} isolates the coefficient of $\eps_{12}$ in $\tilde h_\theta(x_{f,f})$ as
	\begin{align}
		\Dual_{12}\big(\tilde h_\theta(x_{f,f})\big) &= f(x)^\rmT\nabla^2 h_\theta(x)f(x) \nn\\
		&\quad+\nabla h_\theta(x)^\rmT f'(x)f(x).
		\label{eq:ff_coefficient}
	\end{align}
	Differentiating $L_f h_\theta(x)=\nabla h_\theta(x)^\rmT f(x)$ with respect to $x$ and evaluating along $f(x)$ yields, via the product rule,
	\begin{align}
		L_f^2 h_\theta(x) &= f(x)^\rmT\nabla^2 h_\theta(x)f(x) \nn\\
		&\quad+\nabla h_\theta(x)^\rmT f'(x)f(x),
		\label{eq:l2f_identity}
	\end{align}
	which matches~\eqref{eq:ff_coefficient} and confirms~\eqref{eq:extract_l2f}.
	
	Setting $y=x$, $v=f(x)$, $w=\col_j(G(x))$, and $u=f'(x)\col_j(G(x))$ in~\eqref{eq:iterated_hyper_chain} yields the analogous identity
	\begin{align}
		\Dual_{12}\big(\tilde h_\theta(x_{f,G_j})\big) &= f(x)^\rmT\nabla^2 h_\theta(x)\col_j(G(x)) \nn\\
		&\quad+\nabla h_\theta(x)^\rmT f'(x)\col_j(G(x)),
		\label{eq:fg_coefficient}
	\end{align}
	which equals $L_{\col_j(G)} L_f h_\theta(x)$ by the identity~\eqref{eq:l2f_identity} applied with the second vector field $\col_j(G)$ replacing $f$ in the rightmost position. This confirms~\eqref{eq:extract_lglf}.
\end{proof}

The hyper-dual extension carries an exponential memory cost in the relative degree: for a system of relative degree $r$, the corresponding dual number has $2^r$ independent scalar components. The relative-degree-two case ($r=2$) widens the static buffer by a factor of four, which is modest, but higher relative degrees erode the memory advantage over reverse-mode AD. The framework is therefore most useful for mechanical systems governed by position--velocity states, which cover a broad class of practical robotic applications.

\section{Embedded Validation and Numerical Profiling}
\label{sec:examples}

This section validates the dual compiler on three nonlinear control systems of increasing complexity. The workflow is identical for all three: each network is trained in PyTorch, exported to ONNX, and translated by the ahead-of-time compiler into a self-contained C++ header. The compiler proceeds in three stages: it ingests the model and freezes the topology $(n_0,\ldots,n_d)$ and activation types; it sizes the static workspace of Corollary~\ref{cor:static_memory} (plus a shared real-path cache that avoids recomputing the primal preactivations across the $m+1$ directional passes); and it emits fully unrolled C++ with the weights embedded as \texttt{constexpr} arrays, no allocation call sites, no recursion, and no data-dependent branches other than the activation gates. The emitted header is cross-compiled against the ESP-IDF toolchain and deployed on an ESP32-S3 microcontroller, the Espressif dual-core 32-bit Xtensa LX7 processor running at $240$~MHz with $512$~KB of internal SRAM and no operating system. All measurements are obtained on the bare-metal target with the second core idle, the wireless stack disabled, and the benchmark task pinned to a single core with interrupts masked during the measurement window. Cycle counts are read directly from the on-chip cycle counter and converted to wall time using the configured CPU clock. Each benchmark reports the median and maximum over $1000$ evaluations at randomly sampled states; the maximum serves as an empirical proxy for execution-time jitter, which WCET-oriented deployment must bound.

For a head-to-head comparison, we hand-implement a reverse-mode baseline in C++ on the same processor. The baseline executes the standard forward-then-backward recurrence~\eqref{eq:backprop} and allocates the activation cache from the heap at every call, replicating the behavior of general-purpose AD runtimes; per Remark~\ref{rem:static_reverse}, a statically allocated reverse-mode variant is possible for a fixed topology, so the heap-related components of the measurements below characterize the runtime style of implementation rather than reverse mode as a mathematical algorithm, while the workspace-size comparison of Proposition~\ref{prop:memory} applies to both styles. We deliberately do not use a desktop framework such as LibTorch, because the ESP32-S3 cannot host one and a desktop measurement would not reflect the embedded constraint that motivates this work. The reported reverse-mode timings represent a hand-tuned implementation on the same target hardware. Table~\ref{tab:measurements} summarizes all measurements; the examples discuss them in turn, and each example states which theoretical result it exercises.

\begin{table}[t]
	\centering
	\caption{ESP32-S3 measurements over $1000$ evaluations at randomly sampled states. Med.\ and max.\ are wall times in $\mu$s; jitter is $(\text{max}-\text{med.})/\text{med.}$; static is the compile-time scratch buffer of the dual compiler; heap is the per-call dynamic allocation of the runtime-style reverse-mode baseline.}
	\label{tab:measurements}
	\footnotesize
	\setlength{\tabcolsep}{3pt}
	\begin{tabular}{llrrrr}
		\hline
		Example & Method & Med. & Max. & Jitter & Memory (B) \\
		\hline
		Bicycle & Dual & $\bicycleDualUs$ & $\bicycleDualMaxUs$ & $5.0\%$ & $\bicycleDualStatic$ static \\
		& Reverse & $\bicycleAdUs$ & $\bicycleAdMaxUs$ & $70.5\%$ & $\bicycleAdBytes$ heap \\
		VdP & Dual & $\VdPDualUs$ & $\VdPDualMaxUs$ & $1.9\%$ & $\VdPDualStatic$ static \\
		& Reverse & $\VdPAdUs$ & $\VdPAdMaxUs$ & $32.8\%$ & $\VdPAdBytes$ heap \\
		Pendulum & Hyper-dual & $\pendDualUs$ & $\pendDualMaxUs$ & $4.0\%$ & $\pendDualStatic$ static \\
		& Reverse+FD & $\pendAdUs$ & $\pendAdMaxUs$ & $1.9\%$ & $\pendAdBytes$ heap \\
		\hline
	\end{tabular}
\end{table}

\begin{example}[Automotive lane keeping via kinematic bicycle model]\label{ex:bicycle}
	Consider an autonomous vehicle governed by a kinematic bicycle model with wheelbase $\ell_w>0$~\cite{rajamani2012vehicle}. Let $x=[p_x,\,p_y,\,\psi,\,v]^\rmT\in\BBR^4$ denote the state comprising position, heading, and speed. The physical inputs are the longitudinal acceleration $a$ and the front steering angle $\delta$, which enters the heading dynamics through $\dot\psi=(v/\ell_w)\tan\delta$. Since $\tan(\cdot)$ is a bijection on $(-\pi/2,\pi/2)$, we adopt the exact input transformation $u\triangleq[a,\,\tan\delta]^\rmT\in\BBR^2$, which renders the model control-affine without approximation; the physical steering command is recovered as $\delta=\arctan u_{(2)}$. The continuous-time dynamics~\eqref{eq:dynamics} then take the form
	\begin{equation}
		f(x) = \matl v\cos\psi \\ v\sin\psi \\ 0 \\ 0 \matr,\quad G(x) = \matl 0 & 0 \\ 0 & 0 \\ 0 & v/\ell_w \\ 1 & 0 \matr.
		\label{eq:bicycle_dynamics}
	\end{equation}
	We parameterize the barrier function $h_\theta$ using a ReLU network with architecture $4$-$32$-$32$-$1$, so that $d=3$ and the baseline forward-pass cost is $C_\mathrm{f}=2{,}432$ FLOPs. This example exercises Theorem~\ref{thm:main} and Proposition~\ref{prop:dual_cost} in the multi-input regime $m=2$, where Algorithm~\ref{alg:dual_cbf} batches two input-direction columns.
	
	\textit{Arithmetic complexity.} Since $c_{\sigma_i}=0$ for ReLU, Proposition~\ref{prop:dual_cost} fixes the dual forward pass at $C_\mathrm{df}=4{,}799$ FLOPs. Algorithm~\ref{alg:dual_cbf} executes three directional passes ($x_f$, $x_{G,1}$, $x_{G,2}$), for a total cost of $14{,}397$ FLOPs. The complete reverse-mode constraint assembly from Remark~\ref{rem:ad_cost} evaluates to $4{,}817$ FLOPs.
	
	\textit{Memory and timing on the ESP32-S3.} The compiler emits a static footprint of $\bicycleDualStatic$~bytes: a single $96$-float scratch array in the data segment, decomposed into the $64$-float dual workspace of Corollary~\ref{cor:static_memory} and a $32$-float real-path cache reused across the three directional passes, with zero dynamic allocation. The runtime-style reverse-mode baseline allocates $\bicycleAdBytes$~bytes of heap per call, comprising the activation cache of Proposition~\ref{prop:memory}\ref{prop2_p1} and allocator bookkeeping. The dual compiler assembles the complete CBF safety constraint in a median of $\bicycleDualUs$~$\mu$s, while the baseline requires $\bicycleAdUs$~$\mu$s. Despite executing approximately $2.99\times$ more arithmetic, the dual compiler is faster in median wall time, because the per-call allocation and cache-disruption overhead borne by the baseline on a microcontroller exceeds the arithmetic difference at this network scale. The worst-case behavior is more consequential than the median for embedded deployment: over $1000$ evaluations, the dual compiler's maximum is $\bicycleDualMaxUs$~$\mu$s ($5.0\%$ above its median), whereas the baseline's maximum is $\bicycleAdMaxUs$~$\mu$s ($70.5\%$ above its median), a spread produced by allocator-state-dependent latency that a WCET bound would have to absorb.\exmark
\end{example}

\begin{example}[Neural CBF for the Van der Pol oscillator]\label{ex:neural}
	Consider the nonlinear system~\eqref{eq:dynamics} with state dimension $n=2$ and input dimension $m=1$. We encode the safety envelope $h_\theta$ using a two-hidden-layer ReLU network with architecture $2$-$64$-$64$-$1$, so that $C_\mathrm{f}=8{,}704$ FLOPs.
	
	\textit{Arithmetic complexity.} Proposition~\ref{prop:dual_cost} yields $C_\mathrm{df}=17{,}279$ FLOPs per dual pass. Algorithm~\ref{alg:dual_cbf} executes two directional passes ($x_f$ and $x_G$), for a total of $34{,}558$ FLOPs. The complete reverse-mode constraint assembly from Remark~\ref{rem:ad_cost} requires $17{,}284$ FLOPs.
	
	\textit{Memory and timing on the ESP32-S3.} The dual compiler operates within a $\VdPDualStatic$-byte static scratch buffer with zero dynamic allocation, while the reverse-mode baseline allocates $\VdPAdBytes$~bytes of heap storage per call to retain the activation cache established by Proposition~\ref{prop:memory}\ref{prop2_p1}. The dual compiler assembles the complete CBF constraint in a median of $\VdPDualUs$~$\mu$s on the ESP32-S3, while the reverse-mode baseline requires a median of $\VdPAdUs$~$\mu$s. The single-input case $m=1$ is the regime in which reverse mode amortizes its single backward pass most effectively, exactly as Remark~\ref{rem:ad_cost} predicts, and the measured timing ratio ($\sim$1.17$\times$) closely trails the analytical arithmetic ratio of $\sim$2.00$\times$ once embedded dispatcher overhead is absorbed. This example therefore makes the memory-first argument of Section~\ref{sec:intro} concrete: the reverse-mode baseline is faster on the median but pays $\VdPAdBytes$~bytes of heap allocation per call, exposing the controller to fragmentation-induced latency spikes. The worst-case execution time tells a different story: over $1000$ evaluations, the dual compiler's maximum is $\VdPDualMaxUs$~$\mu$s ($1.9\%$ above its median), while the baseline's maximum is $\VdPAdMaxUs$~$\mu$s ($32.8\%$ above its median). Even in the regime where reverse mode wins on average, the dual compiler's WCET is lower and its jitter is an order of magnitude smaller---the characteristics that certification demands. Fig.~\ref{fig:VdP_phase} shows the closed-loop phase portrait: the nominal trajectory exits the safe set, while the dual-compiled filter confines the state to the forward-invariant region.\exmark
\end{example}

\begin{figure}
	\centering
	\includegraphics[width=.98\columnwidth]{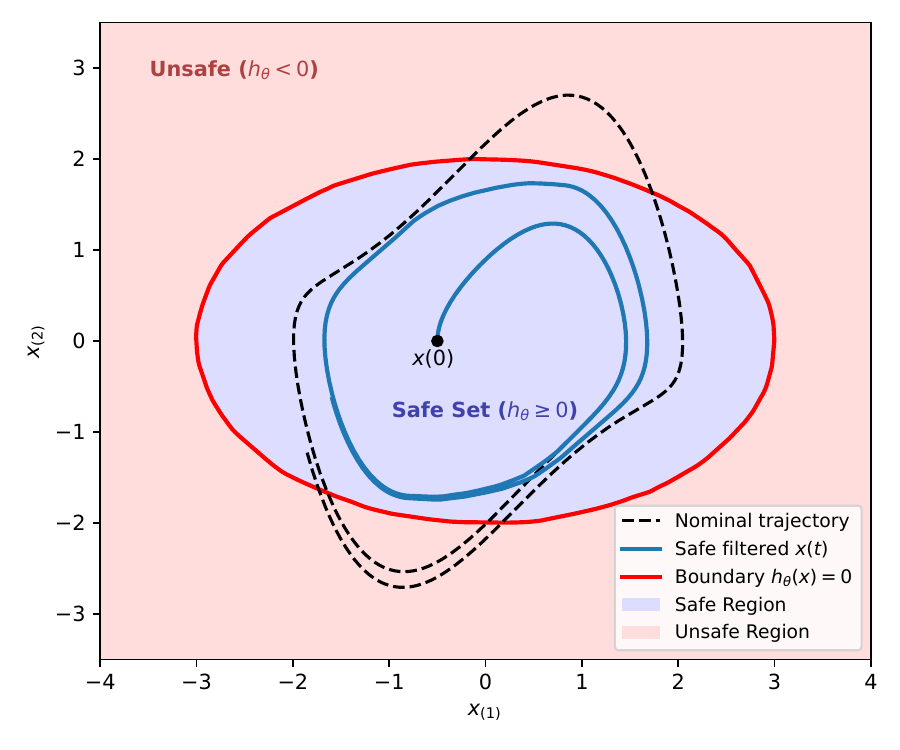}
	\caption{Phase portrait of the Van der Pol (VdP) oscillator under the neural CBF filter, evaluated on the ESP32-S3. The learned zero level set $h_\theta(x)=0$ (red curve) bounds the forward-invariant safe set. The nominal trajectory (black dashed) exits this set, while the dual-compiled filtered trajectory (blue solid) remains within and slides along the boundary.}
	\label{fig:VdP_phase}
\end{figure}

\begin{example}[Second-order safety for an inverted pendulum]\label{ex:pendulum}
	Enforcing safety on systems where the control input does not appear in $\dot h_\theta$ requires high-order CBFs~\cite{xiao2021high_order_cbf, nguyen2016exponential}. Consider a fully actuated inverted pendulum with state $x=[\theta,\,\dot\theta]^\rmT\in\BBR^2$ (angle and angular velocity) and input torque $u\in\BBR$. The drift and input fields evaluate as
	\begin{equation}
		f(x) = \matl \dot\theta \\ \tfrac{g}{L}\sin\theta \matr,\quad G(x) = \matl 0 \\ \tfrac{1}{mL^2} \matr.
	\end{equation}
	
	\textit{Relative degree and barrier design.} For a generic twice-differentiable $h_\theta$, the input Lie derivative $L_G h_\theta(x)=(\nabla h_\theta)^\rmT G(x)=\tfrac{1}{mL^2}\,\partial h_\theta/\partial\dot\theta$ is nonzero in general, giving relative degree one. To obtain relative degree two, the barrier must satisfy $\partial h_\theta/\partial\dot\theta\equiv 0$, i.e., $h_\theta$ must not depend on $\dot\theta$. We enforce this by design: we parameterize $h_\theta$ as a function of $\theta$ alone, feeding only $x_{(1)}=\theta$ to the network. Since the ReLU activation has zero second derivative almost everywhere, we use a twice-differentiable softplus network of architecture $1$-$32$-$32$-$1$ (equivalently, a $2$-$32$-$32$-$1$ network whose velocity-column weights in $W_1$ are zeroed at training time and frozen). With $\partial h_\theta/\partial\dot\theta\equiv 0$, we have $L_G h_\theta\equiv 0$ for all $x$, confirming relative degree two. The relevant high-order CBF safety constraint then takes the form
	\begin{equation}
		L_f^2 h_\theta(x)+L_G L_f h_\theta(x)\,u\geq -\alpha_2\!\left(L_f h_\theta(x)+\alpha_1(h_\theta(x))\right),
		\label{eq:hocbf}
	\end{equation}
	which is affine in $u$ and is assembled by the hyper-dual algorithm from the quantities extracted via Theorem~\ref{thm:hyper_lie}.
	
	\textit{Arithmetic complexity.} Theorem~\ref{thm:hyper_lie} expands the dual algebra to the four-dimensional hyper-dual space $\BBD_2$. Extracting $L_f^2 h_\theta(x)$ evaluates the hyper-dual network at the seed $x_{f,f}$ of~\eqref{eq:seed_ff}, and extracting $L_G L_f h_\theta(x)$ evaluates it at the seed $x_{f,G_1}$ of~\eqref{eq:seed_fg}. Propagating the four hyper-dual components multiplies the arithmetic cost of the first-order algebra by approximately four; the softplus activation introduces transcendental function evaluations, but the FLOP count is independent of the numerical values of the (partially zeroed) weights.
	
	\textit{Memory and timing on the ESP32-S3.} The four-component hyper-dual algebra widens the static scratch buffer to $\pendDualStatic$~bytes (a $128$-float array, matching the $4\max_i n_i=4\times32=128$ floats of the hyper-dual workspace), with zero dynamic allocation. Both second-order Lie derivatives are extracted in a median of $\pendDualUs$~$\mu$s on the ESP32-S3. True double-backpropagation does not fit on a microcontroller without porting an entire AD runtime, so as a practical baseline we implement single backpropagation followed by finite-difference Hessian--vector products on the gradient---the standard fallback for embedded users without a tape-of-tapes infrastructure. This baseline executes in a median of $\pendAdUs$~$\mu$s, approximately $1.91\times$ slower than the hyper-dual compiler. Beyond the speed advantage, the comparison is asymmetric in correctness: Theorem~\ref{thm:hyper_lie} guarantees that the hyper-dual extension extracts $L_f^2 h_\theta$ and $L_G L_f h_\theta$ exactly, whereas the finite-difference baseline incurs $O(h_\mathrm{fd})$ truncation error that biases the right-hand side of~\eqref{eq:hocbf} in a state-dependent, unpredictable way. The hyper-dual compiler thus delivers exact second-order constraint coefficients through a single memoryless forward pass, faster than the approximate baseline and within an unchanging static buffer. Fig.~\ref{fig:hardware_benchmarks} summarizes the wall-time and dynamic-memory measurements for all three examples.\exmark
\end{example}

\begin{figure*}[h!]
	\centering
	\ref*{benchlegend}
	\vspace{0.2cm}
	
	\begin{minipage}[b]{0.48\linewidth}
		\centering
		\begin{tikzpicture}
			\begin{axis}[
				ybar=2pt,
				bar width=10pt,
				area legend, 
				width=\linewidth,
				height=5.0cm,
				enlarge x limits=0.25,
				ymin=0, ymax=1200, 
				ylabel={Execution Time ($\mu$s)},
				symbolic x coords={Bicycle, VdP, Pendulum},
				xtick=data,
				legend to name=benchlegend,
				legend style={
					legend columns=-1,
					draw=none,
					fill=none,
					font=\scriptsize,
					/tikz/every even column/.append style={column sep=0.5cm}
				},
				]
				\addplot[
				fill=mplblue, draw=black,
				nodes near coords,
				every node near coord/.append style={font=\scriptsize, xshift=-5pt}
				] coordinates {(Bicycle,120.87) (VdP,327.88) (Pendulum,514.56)};
				
				\addplot[
				fill=mplred, draw=black, postaction={pattern=north east lines},
				nodes near coords,
				every node near coord/.append style={font=\scriptsize, xshift=5pt}
				] coordinates {(Bicycle,140.33) (VdP,280.15) (Pendulum,983.45)};
				
				\legend{Dual (measured), Reverse AD (measured)}
			\end{axis}
		\end{tikzpicture}
	\end{minipage}\hfill
	\begin{minipage}[b]{0.48\linewidth}
		\centering
		\begin{tikzpicture}
			\begin{axis}[
				ybar=2pt,
				bar width=10pt,
				width=\linewidth,
				height=5.0cm,
				enlarge x limits=0.25,
				ymin=0, ymax=2000,
				ylabel={Dynamic Memory (Bytes)},
				symbolic x coords={Bicycle, VdP, Pendulum},
				xtick=data,
				]
				\addplot[
				fill=mplblue, draw=black
				] coordinates {(Bicycle,0) (VdP,0) (Pendulum,0)};
				\node[font=\scriptsize, text=mplblue, anchor=south,xshift=-5pt] at (axis cs:Bicycle, 12) {$0$};
				\node[font=\scriptsize, text=mplblue, anchor=south,xshift=-5pt] at (axis cs:VdP, 12) {$0$};
				\node[font=\scriptsize, text=mplblue, anchor=south,xshift=-5pt] at (axis cs:Pendulum, 12) {$0$};
				
				\addplot[
				fill=mplred, draw=black, postaction={pattern=north east lines},
				nodes near coords,
				every node near coord/.append style={font=\scriptsize, xshift=0pt}
				] coordinates {(Bicycle,800) (VdP,1568) (Pendulum,800)};
			\end{axis}
		\end{tikzpicture}
	\end{minipage}
	\vspace{0.2cm}
	\caption{Hardware profiling on the ESP32-S3 target (Xtensa LX7 at 240 MHz, single core, no OS) over 1000 iterations of randomly sampled states. \textbf{Left:} median execution time for the dual compiler (solid blue) and the hand-tuned reverse-mode AD baseline (hatched red). The dual compiler is faster on the bicycle ($m=2$, relative degree 1) despite its higher arithmetic count, slower on the VdP ($m=1$, relative degree 1) where reverse mode amortizes a single backward pass most effectively, and decisively faster on the inverted pendulum (relative degree 2) where the hyper-dual extension delivers exact second-order Lie derivatives in approximately half the time of the finite-difference Hessian--vector-product baseline. \textbf{Right:} dynamic memory allocation. The dual compiler allocates zero bytes of dynamic memory in all three cases, while the reverse-mode baseline allocates the activation cache established in Proposition~\ref{prop:memory}\ref{prop2_p1} on the heap on every call---an obstacle to WCET-analyzable embedded deployment regardless of average-case throughput.}
	\label{fig:hardware_benchmarks}
\end{figure*}
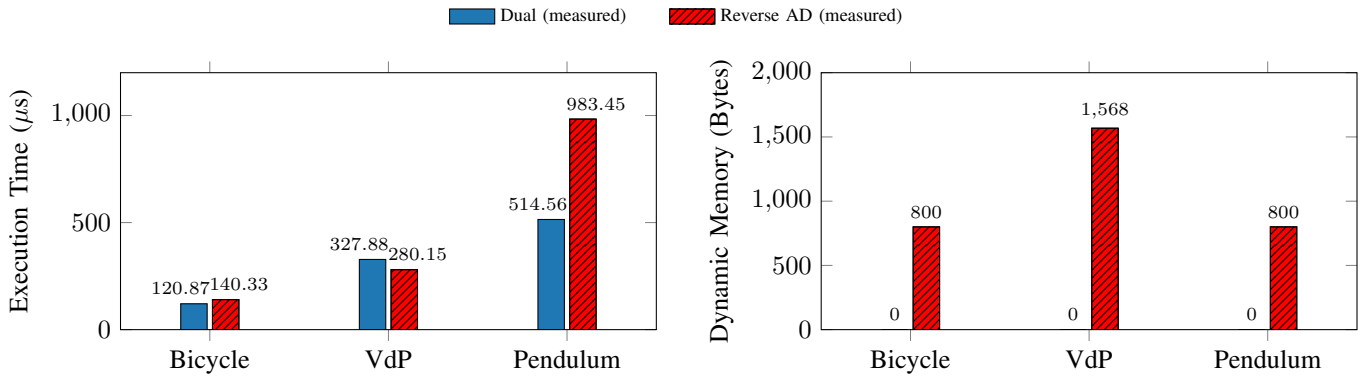

\section{Discussion and Conclusion}
\label{sec:conclusion}

\textit{Relation to forward-mode AD.} The dual compiler is a specialized forward-mode AD engine tied to the fixed topology of a feedforward CBF network. General-purpose forward-mode libraries such as JAX's \texttt{jvp} and PyTorch's \texttt{torch.autograd.functional.jvp} perform the same mathematical operation through operator overloading inside a general computational graph framework. The contribution of the present work is a self-contained, statically analyzable formulation that compiles directly onto microcontrollers without a general AD runtime. This formulation admits the exactness and correctness guarantees of Theorems~\ref{thm:main} and~\ref{thm:correctness} and the safety-preservation result of Corollary~\ref{thm:safety}.

\textit{Forward versus reverse mode.} Classical AD theory shows that reverse mode computes dense gradients of scalar-valued functions ($n_d=1$) in a single backward traversal, while forward mode requires one pass per input direction. The CBF safety constraint, however, does not require the dense gradient but rather $m+1$ Jacobian-vector-field products. The dual compiler extracts each product in a single forward pass without constructing a graph. Remark~\ref{rem:ad_cost} quantifies the resulting trade-off: as $m$ grows, reverse mode amortizes a single differentiation more effectively in arithmetic terms. For embedded safety-critical deployment, however, the relevant axis is not arithmetic but memory. As argued in Section~\ref{sec:intro}, an algorithm that performs more arithmetic operations within a static buffer is fundamentally more deployable than a faster alternative that depends on a heap. The dual compiler accepts the larger arithmetic constant in exchange for a zero-allocation, WCET-analyzable memory profile that holds across all values of $m$.

\textit{Limitations.} The static-buffer guarantee that motivates this work is not unconditional. The dual-pass scratch workspace is $2\max_i n_i$ floats (Proposition~\ref{prop:memory}\ref{prop2_p2}), rising to $4\max_i n_i$ for the hyper-dual extension, while any reverse-mode implementation requires a workspace of $\sum_i n_i$ floats (Proposition~\ref{prop:memory}\ref{prop2_p1}). Both modes additionally store the parameter vector $\theta$, which is shared read-only memory. As $\max_i n_i$ grows into the tens of thousands of neurons, the workspace advantage of the dual compiler shrinks relative to the activation cache, and very wide barrier networks may require either external SRAM or architectural compression before the static-buffer formulation holds at competitive memory cost. The embedded validation targets the ESP32-S3; deployment on automotive- or aerospace-grade hardware (e.g., an STM32H7 Cortex-M7 or an ISO 26262-certified ECU) would strengthen the certification narrative. The framework targets feedforward networks; recurrent or attention-based barrier architectures are out of scope and would require an extension of the dual algebra to hidden-state recurrences.

\textit{Extensions and future work.} The framework applies directly to control Lyapunov functions and to combined CLF-CBF-QP controllers through the same dual-extended forward pass. Integration with learning pipelines, using the forward-mode gradient during online adaptation rather than only during deployment, is a natural next step. Furthermore, the current framework targets feedforward networks. Extending the dual algebra to handle hidden-state recurrences will enable the deployment of recurrent or attention-based barrier architectures. Finally, while Section~\ref{sec:examples} validates the compiler on the ESP32-S3 microcontroller, future work will target automotive or aerospace hardware, such as an STM32H7 series Cortex-M7 or an ISO 26262 certified ECU, to strengthen the certification narrative for safety-critical applications.

\textit{Reproducibility.} The compiler is released as the open-source Python package \texttt{dual-cbf-compiler}, installable via
\begin{center}
	\texttt{pip install dual-cbf-compiler}
\end{center}
with source repository at \url{https://github.com/mkamaldar/dual_cbf_compiler}. This package contains the PyTorch and ONNX ingestion frontends, the C++ emitter, and a \texttt{pytest}-based equivalence suite that asserts numerical agreement between the generated C++ and PyTorch's \texttt{torch.autograd} reference at single-precision machine epsilon. The embedded validation experiments reported in Section~\ref{sec:examples} are released separately at \url{https://github.com/mkamaldar/dual_cbf_esp32_experiments}. This repository includes training scripts for the three example CBFs, the hand-tuned reverse-mode AD baseline, the ESP-IDF projects, and tooling that parses the captured serial output back into the paper's measurement macros.

\textit{Summary.} This paper has presented a dual-algebraic compiler that transforms a feedforward neural control barrier function into an exact Lie derivative engine evaluated by forward propagation alone. The compiler admits closed-form FLOP and memory bounds, eliminates the dynamic graph construction that obstructs embedded deployment, and extends to the second-order Lie derivatives required by relative-degree-two CBFs. Embedded validation on the ESP32-S3 confirms that the complete CBF safety constraint is assembled from a statically allocated buffer in hundreds of microseconds. This zero-allocation execution comfortably supports kilohertz-rate safety filters while satisfying the fundamental memory constraints of deterministic embedded systems.

\section*{Data availability}
The compiler is available as the open-source Python package \texttt{dual-cbf-compiler} on PyPI, with source code at \url{https://github.com/mkamaldar/dual_cbf_compiler}. The embedded validation experiments, including training scripts, the reverse-mode baseline, ESP-IDF projects, and measurement tooling, are available at \url{https://github.com/mkamaldar/dual_cbf_esp32_experiments}.

\appendix

\section*{Proof of Proposition~\ref{thm:exact_diff}}
\label{app:thm_exact_diff}
\begin{proof}
	To prove~\ref{ted_p1}, note that Definition~\ref{def:dual_ext} and the sum rule $(\varphi+\psi)'(a)=\varphi'(a)+\psi'(a)$ imply
	\begin{align}
		\widetilde{(\varphi+\psi)}(z) &= (\varphi+\psi)(a)+(\varphi+\psi)'(a)b\eps \nn\\
		&= \big[\varphi(a)+\varphi'(a)b\eps\big]\nn\\
		&\quad+\big[\psi(a)+\psi'(a)b\eps\big],
		\label{eq:sum_rule_dual}
	\end{align}
	which, by the dual addition~\eqref{eq:dual_add}, equals $\tilde\varphi(z)+\tilde\psi(z)$.
	
	To prove~\ref{ted_p2}, note that the dual multiplication~\eqref{eq:dual_mul} evaluates the product of the extensions as
	\begin{align}
		\tilde\varphi(z)\,\tilde\psi(z) &= \big[\varphi(a)+\varphi'(a)b\eps\big]\big[\psi(a)+\psi'(a)b\eps\big] \nn\\
		&= \varphi(a)\psi(a)\nn\\
		&\quad+\big[\varphi'(a)\psi(a)+\varphi(a)\psi'(a)\big]b\eps,
		\label{eq:leibniz_dual}
	\end{align}
	where the $\eps^2$ cross term vanishes by nilpotency. Since the Leibniz rule identifies the bracketed coefficient in~\eqref{eq:leibniz_dual} as $(\varphi\psi)'(a)$, Definition~\ref{def:dual_ext} confirms $\tilde\varphi(z)\tilde\psi(z)=\widetilde{(\varphi\psi)}(z)$.
	
	To prove~\ref{ted_p3}, define $y\triangleq\varphi(a)$ and $w\triangleq\varphi'(a)b$, and note that
	\begin{align}
		\tilde\psi(\tilde\varphi(z)) &= \tilde\psi(y+w\eps) = \psi(y)+\psi'(y)w\eps \nn\\
		&= \psi(\varphi(a))+\psi'(\varphi(a))\varphi'(a)b\eps.
		\label{eq:chain_dual}
	\end{align}
	Since the chain rule identifies $\psi'(\varphi(a))\varphi'(a)=(\psi\circ\varphi)'(a)$, Definition~\ref{def:dual_ext} confirms $\tilde\psi(\tilde\varphi(z))=\widetilde{(\psi\circ\varphi)}(z)$.
	
	To prove~\ref{ted_p4}, we first establish by induction on $k\in\BBN$ that evaluating the monomial $x^k$ at $z$ in the dual arithmetic yields $a^k+ka^{k-1}b\eps$. The base case $k=0$ yields the multiplicative identity $1+0\eps$. For the inductive step, multiplying the hypothesis $z^k=a^k+ka^{k-1}b\eps$ by $z=a+b\eps$ via~\eqref{eq:dual_mul} yields
	\begin{equation}
		z^{k+1} = a^{k+1}+\big(a^k+ka^k\big)b\eps = a^{k+1}+(k+1)a^{k}b\eps,
		\label{eq:monomial_induction}
	\end{equation}
	completing the induction. Since $(x^k)'(a)=ka^{k-1}$, the monomial evaluation coincides with Definition~\ref{def:dual_ext}. For a polynomial $\varphi(x)=\sum_{k=0}^{K}c_k x^k$, dual evaluation applies real scalings $c_k$ and dual additions~\eqref{eq:dual_add} to the monomial evaluations, which, by~\ref{ted_p1} and linearity of the derivative, reproduces $\varphi(a)+\varphi'(a)b\eps=\tilde\varphi(z)$.
\end{proof}

\section*{Proof of Proposition~\ref{prop:dual_cost}}
\label{app:prop_dual_cost}

\begin{proof}
	To establish this complexity, we isolate the arithmetic operations acting upon the real and dual components at each layer. For all $i\in\{1,\ldots,d\}$, let the dual vector $z_{i-1}=a_{i-1}+c_{i-1}\eps$ define the input entering the affine transformation $\ell_i$.
	
	For all affine layers $i\in\{1,\ldots,d\}$, Lemma~\ref{lem:dual_affine} establishes that the real path computes $W_i a_{i-1}+b_i$ at a cost of $2n_i n_{i-1}$ operations, while the dual path computes the unbiased projection $W_i c_{i-1}$ at a cost of $2n_i n_{i-1}-n_i$ operations. For all activation layers $i\in\{1,\ldots,d-1\}$, Lemma~\ref{lem:dual_activation} implies that the real path evaluates $\sigma_i(a_i)$ componentwise at a cost of $n_i$ operations, while the dual path extracts the scalar derivatives $\bar\sigma_i'(a_{i(j)})$ for $j\in\{1,\ldots,n_i\}$ at a cost of $c_{\sigma_i}n_i$ operations and scales the dual vector via the elementwise product $\diag(\sigma_i'(a_i))c_i$ at a cost of $n_i$ operations, establishing a total dual activation cost of $(c_{\sigma_i}+1)n_i$.
	
	Aggregating the real operations across the network depth recovers $C_\mathrm{f}$. Aggregating the dual operations yields
	\begin{align}
		&\sum_{i=1}^d(2n_i n_{i-1}-n_i)+\sum_{i=1}^{d-1}(c_{\sigma_i}+1)n_i \nn\\
		&\qquad= C_\mathrm{f}+\sum_{i=1}^{d-1}c_{\sigma_i}n_i-\sum_{i=1}^{d}n_i.
		\label{eq:dual_aggregation}
	\end{align}
	Summing the real and dual contributions confirms~\eqref{eq:Cdf_def}. Since Algorithm~\ref{alg:dual_cbf} evaluates the dual-extended network independently for the drift field $f$ and for the $m$ columns of the input field $G$, assembling the safety constraint consumes exactly $(m+1)C_\mathrm{df}$ operations.
\end{proof}

\section*{Proof of Proposition~\ref{thm:hyper_exact_diff}}
\label{app:thm_hyper_exact_diff}
\begin{proof}
	Write $z=a+\eta$ with $\eta\triangleq b\eps_1+c\eps_2+d\eps_{12}$. The nilpotency relations $\eps_1^2=\eps_2^2=0$ and $(\eps_1\eps_2)^2=0$ imply $\eta^2=2bc\eps_{12}$ and $\eta^k=0$ for all $k\geq 3$.
	
	The sum rule follows from Definition~\ref{def:hyper_ext} and linearity of the first and second derivatives, exactly as in Proposition~\ref{thm:exact_diff}\ref{ted_p1}. For the product rule, expanding $\tilde\varphi(z)\tilde\psi(z)$ in the arithmetic of $\BBD_2$ and discarding the annihilated monomials yields the coefficients $\varphi\psi$, $(\varphi\psi)'b$, $(\varphi\psi)'c$, and $\varphi''\psi bc+2\varphi'\psi'bc+\varphi\psi''bc+(\varphi'\psi+\varphi\psi')d$ on the basis $\{1,\eps_1,\eps_2,\eps_{12}\}$, all evaluated at $a$; recognizing the $\eps_{12}$ coefficient as $(\varphi\psi)''(a)bc+(\varphi\psi)'(a)d$ via the Leibniz rules for first and second derivatives confirms $\tilde\varphi(z)\tilde\psi(z)=\widetilde{(\varphi\psi)}(z)$. For the chain rule, applying Definition~\ref{def:hyper_ext} at the intermediate point $\varphi(a)$ with the components of $\tilde\varphi(z)$ and collecting terms yields the coefficients $(\psi\circ\varphi)(a)$, $(\psi\circ\varphi)'(a)b$, $(\psi\circ\varphi)'(a)c$, and $\big[\psi''(\varphi(a))\varphi'(a)^2+\psi'(\varphi(a))\varphi''(a)\big]bc+(\psi\circ\varphi)'(a)d$; the bracketed factor is $(\psi\circ\varphi)''(a)$ by Fa\`a di Bruno's formula at second order, which confirms $\tilde\psi(\tilde\varphi(z))=\widetilde{(\psi\circ\varphi)}(z)$.
	
	For the polynomial statement, the binomial theorem and $\eta^k=0$ for $k\geq 3$ reduce each power to
	\begin{equation}
		(a+\eta)^k = a^k+ka^{k-1}\eta+\tfrac{1}{2}k(k-1)a^{k-2}\eta^2.
		\label{eq:hyper_binomial}
	\end{equation}
	Substituting $\eta^2=2bc\eps_{12}$ into~\eqref{eq:hyper_binomial}, summing against the polynomial coefficients $c_k$, and collecting terms by basis element yields
	\begin{align}
		\sum_{k=0}^{K}c_k(a+\eta)^k &= \sum_{k=0}^{K}c_k a^k+\bigg(\sum_{k=1}^{K}k c_k a^{k-1}\bigg)b\eps_1 \nn\\
		&\quad +\bigg(\sum_{k=1}^{K}k c_k a^{k-1}\bigg)c\eps_2 \nn\\
		&\quad+\Bigg(\bigg(\sum_{k=2}^{K}k(k-1)c_k a^{k-2}\bigg)bc\nn\\
		&\quad +\bigg(\sum_{k=1}^{K}k c_k a^{k-1}\bigg)d\Bigg)\eps_{12}.
		\label{eq:hyper_collected}
	\end{align}
	Identifying the bracketed sums in~\eqref{eq:hyper_collected} as $\varphi(a)$, $\varphi'(a)$, $\varphi'(a)$, and $\varphi''(a)$, respectively, confirms agreement with~\eqref{eq:hyper_scalar}.
\end{proof}

\bibliographystyle{elsarticle-num}
\bibliography{ref}

\end{document}